\newtheorem{construction}{Construction}
\newtheorem{definition}{Definition}
\newtheorem{theorem}{Theorem}
\newtheorem{proposition}{Proposition}
\newtheorem{corollary}{Corollary}
\newtheorem{example}{Example}
\newenvironment{sproof}{%
	\proof}{\endproof}
\newcommand{\rmv}[1]{}                                
\newcommand{\todo}[1]{{#1}} 			    
 \journal{journal (final version)}
\begin{document}
 \begin{frontmatter}
\title{Nested Cover-Free Families for Unbounded Fault-Tolerant Aggregate Signatures}

\date{December 2020} 
 \author{Thais Bardini Idalino}
 \ead{tbardini@uottawa.ca}
 \author{Lucia Moura}
  \ead{lmoura@uottawa.ca}
 \address{University of Ottawa}

\begin{abstract}
Aggregate signatures are used to create one short proof of authenticity and integrity from a set of digital signatures. However, one invalid signature in the set invalidates the entire aggregate, giving no information on which signatures are valid. Hartung et al. (2016) propose a fault-tolerant aggregate signature scheme based on combinatorial group testing. 
Given a bound $d$ on the number of invalid signatures among $n$ signatures to be aggregated, this scheme uses $d$-cover-free families to determine which signatures are invalid.
These combinatorial structures guarantee a moderate increase on the size of the aggregate signature that can reach the best possible compression ratio 
of  $O(\frac{n}{\log n})$, for fixed $d$, coming from an information theoretical bound.
The case where the total number of signatures grows dynamically (unbounded scheme) was not satisfactorily solved in their original paper, since explicit constructions had constant compression ratios. In the present paper, we propose efficient solutions for the unbounded scheme, relying on sequences of $d$-cover-free families that we call {\em nested families}. Some of our constructions yield high compression ratio close to \rmv{the information theoretical bound}\todo{the best known upper bound}. We also propose the use of $(d,\lambda)$-cover-free families to support the
loss of up to $\lambda-1$ parts of the aggregate.
\end{abstract}

\begin{keyword}
Aggregate signature,  fault-tolerance, cover-free family, digital signature, combinatorial group testing.
\end{keyword}
\end{frontmatter}

\section{Introduction}\label{sec:intro}
In cryptography,  aggregate signature schemes allow us to combine a set of digital signatures into a single one, which can be used as proof of integrity and authenticity of a possibly large set of data. This solution is specially useful for applications that manage a large quantity of data and digital signatures, since it can save on communication and storage, as well as improve the signature verification process. A few examples of such applications are outsourced databases \cite{Mykletun}, sensor networks \cite{Li2010}, secure logging \cite{Ma}, certificate chains \cite{Boneh}, vehicular communication \cite{Wasef}, among others. 

The verification of an aggregate signature outputs a positive result only if the entire set of signatures is valid. If we have at least one faulty signature in the set, the proof of integrity and authenticity of all the data involved is invalidated. This happens because when a set of signatures is aggregated into one, this operation does not preserve enough information in order to identify the exact set of invalid signatures. In order to address this shortcoming, Hartung et al.~\cite{fault} propose a fault-tolerant scheme using $d$-cover-free families ($d$-CFF). This scheme generates a more expressive aggregate signature, that can tolerate up to $d$ invalid signatures and identify all the valid ones. 

Combinatorial group testing \cite{gtLivro} deals precisely with this type of problem of determining a set of defective items via testing groups of items.
A $d$-CFF  is a collection of subsets such that no subset is contained in the union of any $d$ other sets in the family.  A 2-CFF is shown in Fig.~\ref{2cffeg}, where each column represents the incidence vector of a set; in group testing, the rows of the matrix represents the groups and its columns, the items. By inspection, we can see that if up 2 items are defective, this set of defectives can be determined from the outcomes of the group tests. For example, if only items $1$ and $2$ are defective, tests $1-6$ fail and tests $7-9$ pass, and the defectives can be determined from this outcome.

\begin{figure}[h!]\label{2cffeg}
\[\setcounter{MaxMatrixCols}{12}
\mathcal{M} = \begin{pmatrix}\label{2cff}
1& 0& 0& 1& 0& 0&  1&0 &0& 1&0& 0\\
1 & 0 & 0 & 0 &1  & 0 & 0 &  1 & 0 &  0 & 1&  0 \\
1 & 0 & 0 & 0 & 0 &  1 & 0 &0 & 1 &0 & 0 & 1 \\
0 &  1& 0 & 1 & 0 & 0 & 0 &0 & 1 &0 & 1 & 0 \\
0 & 1 & 0 & 0 & 1 & 0 & 1 & 0 & 0 & 0 &  0 & 1\\
0 & 1 & 0 & 0 & 0 & 1 & 0 &1& 0 & 1 & 0 &  0\\
0 & 0 &  1 &  1 & 0 & 0 & 0 &  1 & 0 & 0 & 0 &  1 \\
0 & 0 &  1 & 0 &  1 & 0 & 0 & 0 &  1&  1 & 0 & 0 \\
0 & 0 &  1 & 0 & 0 & 1 & 1 &0 & 0 & 0 & 1 & 0  \\
\end{pmatrix}
\]
\caption{A $2$-cover-free family $2$-CFF($9,12$).}
\end{figure}

What is distinctive in the signature aggregation application is that the groups are formed at a different stage than testing: groups of signatures are combined at the aggregation stage  and their testing is done at the verification stage. Another distinctive characteristic of signature aggregation is that we do not have a bound on the number of signatures that need to be aggregated; for example, secure logging applications can not always predict how many log entries will be saved, and outsourced databases not always have control on the amount of data that will be inserted.
 This requires an unbounded fault-tolerant aggregation scheme relying on matrix growth, which, as pointed out by Hartung et al.~\cite{fault}, requires a special sequence of $d$-CFFs arrays rather than a single one. 
Here, the concept of compression ratio is important.  Consider the total number of signatures $n$ (number of columns) and the size of the aggregate signature $s(n)$ (number of rows)
at each stage. The compression ratio is equal to $\rho(n)$ iff  $\frac{n}{s(n)}$ is  $\Theta(\rho(n))$. \rmv{Information theoretical}\todo{The best known} lower bounds on $s(n)$ ~\cite{furedi} show that $\rho(n)$ 
is $O(\frac{n}{(d^2/\log d) \log n})$.

Monotone families of $d$-CFFs  were defined in~\cite{fault} to accomplish unbounded aggregation. 
 The explicit constructions of monotone families given in~\cite{fault}  yield aggregate signatures with length linear in $n$ and thus \emph{constant} compression ratios;
 the authors  pose the open problem of finding better monotone families in order to achieve a more efficient unbounded scheme. In~\cite{COCOApaper}, the present authors 
 find monotone families with $\rho(n)=n^{1-1/c}$, for every integer $c\geq 2$ and fixed $d\geq 1$. 
 In the present work, we approach unbounded aggregate signatures by defining a more flexible sequence of $d-$CFFs that we call a \emph{nested family}, and which generalizes monotone families. We then propose various constructions of nested families, some with compression ratio close to the information theoretical upper bound.
 Although we focus on the specific problem of fault-tolerant digital signature aggregation, our new constructions can be applied to other  combinatorial group testing problems where the number of items to be tested grows dynamically. An earlier version of the present work appeared in~\cite{nested}.
 
In Section 2, we present the background on fault-tolerant aggregate signatures introduced in~\cite{fault} and on $d$-CFF constructions.
In Section 3, we define nested families and present unbounded signature aggregation algorithms that use these new families. 
In Section 4, we provide explicit constructions of such families that yield unbounded aggregate schemes with better compression ratios than previously known. 
In Section 5, we generalize our $d$-CFF constructions to build ($d;\lambda$)-CFFs, which are defined there, and discuss how they can also support faults on signature transmission.  These extensions in Section 5 are not \rmv{mentioned}\todo{considered} in the early version of this paper~\cite{nested}. Section 6 includes conclusions and future work.

\section{Background on Fault-Tolerant Schemes}\label{sec:background}

In this section, we present the fault-tolerant aggregate signature scheme by Hartung et al.~\cite{fault}. We summarize the concepts introduced by the authors~\cite[Sections 1,3,4]{fault}, including the necessary formalization to contextualize our construction. Then, we present background and recursive constructions of cover-free families.

{\todo{\subsection{Traditional signature aggregation}} 
\noindent
\rmv{\noindent {\em 2.1 Fault-Tolerant Aggregate Signature}\\}
Let $\{\sigma_1, \ldots, \sigma_n\}$ be a set of $n$ signatures and let \todo{the set $C = \{(pk_1, m_1), \ldots,\\(pk_n, m_n)\}$} be their corresponding pairs of public key and message. 
A traditional aggregate signature scheme consists of combining all $n$ signatures together in one aggregate signature $\sigma$, which can be as small as a single digital signature \cite{Boneh}. 
By verifying only $\sigma$ we can ensure the integrity and authenticity of the entire set  $C$.
If all signatures $\{\sigma_1, \ldots, \sigma_n\}$ are correctly formed from $C$, the signature verification outputs 1, but if at least one $\sigma_i$ does not match its corresponding pair ($pk_i, m_i$), the verification outputs 0.
In more details,  Boneh et al.~\cite{Boneh} define an aggregate signature scheme with four algorithms \cite{Boneh,fault}:
	
	\begin{enumerate}
		\item \textbf{KeyGen}($1^\kappa$) takes security parameter $\kappa$ and creates a key pair $(pk, sk)$.
		\item \textbf{Sign}($sk, m$) creates a signature for message $m$ using secret key $sk$.
		\item \textbf{Agg}($C_1, C_2, \sigma_1, \sigma_2$) takes two multisets of public key and message pairs $C_1$ and $C_2$ and their corresponding signatures $\sigma_1$ and $\sigma_2$, and outputs an aggregate signature $\sigma$ that certifies the integrity and authenticity of $C = C_1 \cup C_2$.
		\item \textbf{Verify}($C, \sigma$) takes a multiset of public key and message pairs and its aggregate signature $\sigma$. Outputs 1 if the signature is valid and 0 otherwise.
\end{enumerate}

The security of the scheme is based on the difficulty of an adversary to forge a signature of a chosen message after performing $q$ signature queries to an oracle. 
More specifically, the aggregate scheme is $(t, q, \epsilon)$-secure if there is no adversary $\mathcal{A}$ capable of winning the game with probability at least $\epsilon$, performing at most $q$ queries to the oracle, and running in time at most $t$ \cite{Boneh,fault}.

\todo{
\begin{example}[Aggregation of Signatures via Bilinear Maps]\label{regular_bilinear}
As an example of an aggregate signature scheme, let us consider the one proposed by Boneh et al.~\cite{Boneh}, which is based on bilinear maps. A bilinear map is a map $e: G \times G \rightarrow G_T$, with $G, G_T$ being multiplicative cyclic groups of prime order $p$, with the bilinear property which states that for all $u, v \in G$ and positive integers $a, b$, we have that $e(u^a, v^b) = e(u,v)^{ab}$~\cite{Boneh}. The regular signature generation and verification consist of the following steps. Algorithm \textbf{KeyGen}($1^\kappa$) generates a random secret key $x \in \mathbb{Z}_p$, and the public key is computed as $v = g^x$, where $g$ is a generator of $G$. Given a message $m$, \textbf{Sign}($x, m$) computes the hash $h = h(m)$ and signature $\sigma = h^x$, with $h, \sigma \in G$. The verification algorithm receives $v, m, \sigma$, computes $h = h(m)$ and accepts $\sigma$ as valid if $e(\sigma, g) = e(h, v)$~\cite{Boneh}. This comes from the bilinear property, where $e(\sigma, g) = e(h^x, g) = e(h,g)^x = e(h,g^x) = e(h,v)$. 
	The algorithm \textbf{Agg}($C_1, C_2, \sigma_1, \sigma_2$) outputs $\sigma = \sigma_1 \times \sigma_2$, which certifies the integrity and authenticity of $C = C_1 \cup C_2 = \{(v_1,m_1), \ldots, (v_n,m_n)\}$. In general, given $n$ signatures $\sigma_1, \ldots, \sigma_n$, the aggregation consists of $\sigma = \prod_{i=1}^{n} \sigma_i$~\cite{Boneh}.
	The verification algorithm \textbf{Verify}($C, \sigma$) computes $h_i = h(m_i)$ for all messages $m_i, 1 \leq i \leq n$, and ensures that $e(\sigma, g) = \prod_{i=1}^{n} e(h_i, v_i)$. Of course, in this scheme one invalid pair $(v_i, m_i)$ would make $\sigma$ invalid. 
\end{example}		
}

{\todo{\subsection{Fault-Tolerant Aggregate Signature}\label{sec:fault}}
In order to provide fault-tolerance, the signature verification needs to output a list of valid signatures instead of just 0 or 1, so the scheme should provide \emph{list verification} instead of boolean verification.
To describe this scheme, Hartung et al.~\cite{fault} use the concepts of ``claim" $c = (pk, m)$ as a tuple of message $m$ and public key $pk$, and ``claim sequence" $C$ as a sequence of claims. A claim sequence requires an order among the claims, so each of its position $i$ may contain one claim or a placeholder $\bot$. Two claim sequences $C_1, C_2$ are defined as exclusively mergeable if for all $i$, $C_1[i] = \bot$ or $C_2[i] = \bot$\todo{. Given} \rmv{, and for a}$C_1$ and $C_2$ of length $k$ and $l$, with $k \geq l$, \todo{the merge of $C_1$ and $C_2$,} $C_1 \sqcup C_2 = (c_1, \ldots, c_k)$, is defined by 

\[c_i = \begin{cases}
 C_1[i], \text{ if }  C_2[i] = \bot, C_2[i] = C_1[i] \text{ or } i > l\\
 C_2[i], \text{ otherwise.}
\end{cases}\]

Let $C = (c_1, \ldots, c_n)$ be a claim sequence and $b \in \{0,1\}^n$ be a bit sequence that specifies a selection of indexes. Then $C[b]$ denotes a subsequence of $C$ having claim $c_j$ in position $j$ whenever $b[j] = 1$, and $\bot$ otherwise. \todo{The claim sequence is used as a way to impose an order to the claims, which is required for the process of generating a fault-tolerant aggregation signature and its verification.}
For more details, see Hartung et al.~\cite[Section 3]{fault}. The definition of \todo{signature aggregation with} list verification is given bellow.

\begin{definition}\label{listver}[Hartung et al.\cite{fault}] An aggregate signature scheme with list verification consists of four algorithms $\Sigma$:
	\begin{enumerate}
		\item \textbf{KeyGen}($1^\kappa$) takes security parameter $\kappa$ and creates a key pair $(pk, sk)$.
		\item \textbf{Sign}($sk, m$) creates a signature for message $m$ using secret key $sk$.
		\item \textbf{Agg}($C_1, C_2, \tau_1, \tau_2$) takes two exclusively mergeable claim sequences $C_1$ and $C_2$ and their corresponding signatures $\tau_1$ and $\tau_2$, and outputs an aggregate signature $\tau$ that certifies the integrity and authenticity of  the sequence $C = C_1 \sqcup C_2$.
		\item \textbf{Verify}($C, \tau$) takes a claim sequence C and its aggregate signature $\tau$ as input. Outputs the set of valid claims, which can be all the elements in C or even the empty set.
	\end{enumerate}
\end{definition}

Consider a claim sequence $C$ with $n$ claims, their corresponding signatures $\sigma_1, \ldots, \sigma_n$ with at most $d$ invalid ones, and the aggregate signature $\tau$.
The aggregate signature scheme $\Sigma$ is tolerant against $d$ errors if $\Sigma$.\textbf{Verify}($C, \tau$) outputs the set of claims that have valid signatures.
Therefore, a \emph{d-fault-tolerant} aggregate signature scheme is an aggregate signature scheme with list verification with a tolerance against $d$ errors. From now on we will use $\sigma$ to represent a standard aggregate signature, and $\tau$ for \todo{an aggregate} signature of a fault-tolerant scheme.

The security of the scheme is presented by Hartung et al. \cite{fault}.
An adversary advantage on forging an aggregate signature scheme with list verification is given by its probability of winning the following game:

\begin{itemize}
	\item \textbf{Setup:} $\mathcal{A}$ receives a random public key $pk$.
	\item \textbf{Queries:} $\mathcal{A}$ adaptively requests signatures of messages $m_i$ to an oracle using $pk$ and receives $\tau_i = Sign(sk,m_i)$.
	\item \textbf{Response:} $\mathcal{A}$ outputs a claim sequence $C^*$ and an aggregate signature $\tau^*$.
\end{itemize}

The adversary wins if there is a claim $c^* = (pk, m^*)$ in $C^*$ that belongs to the set of valid claims output by \textbf{Verify}($C^*, \tau^*$), and the signature of $m^*$ was never requested to the signature oracle. The scheme is said to be $(t, q, \epsilon)$-secure if there is no adversary that runs in time $t$, performs at most $q$ queries and wins the game above with probability at least $\epsilon$.

Hartung et al.~\cite{fault} uses $d$-CFFs to instantiate a generic fault-tolerant aggregate signature scheme. Let $\mathcal{M}$ be a $t \times n$ binary incidence matrix of a $d$-CFF, where each column is the characteristic vector of a set in the family.
 Given $\mathcal{M}$ and a set $\{\sigma_1, \ldots, \sigma_n\}$ of signatures to be aggregated, each column $j$ represents a signature $\sigma_j$, and the rows of $\mathcal{M}$ indicate which signatures will be aggregated together. We are able to identify all valid signatures as long as the amount of invalid ones does not exceed a bound $d$.

Hartung et al.~\cite[Section 4]{fault} define a fault-tolerant aggregate signature scheme based on an ordinary aggregate signature scheme $\Sigma$ that supports claims, claim sequences, and the empty signature $\lambda$ as input. We denote 
$\mathcal{M}_i$ as row $i$ of matrix $\mathcal{M}$, so $C[\mathcal{M}_i]$ represents the corresponding subsequence of a claim sequence $C$.
This scheme inherits the security of $\Sigma$, and its algorithms are presented bellow:

\begin{enumerate}\label{fault-tolerant-alg}
	\item \textbf{KeyGen}($1^\kappa$) creates a key pair $(pk, sk)$ using  \textbf{KeyGen} from $\Sigma$ and security parameter $\kappa$.
	\item \textbf{Sign}($sk, m$) receives a secret key and message, and outputs the signature given by $\Sigma.$\textbf{Sign}$(sk, m)$.
	\item \textbf{Agg}($C_1, C_2, \tau_1, \tau_2$) takes two exclusive mergeable claim sequences $C_1$ and $C_2$ and corresponding signatures $\tau_1$ and $\tau_2$, and proceeds as follows:
	\begin{enumerate}
		\item If one or both of the claim sequences $C_k$ ($k \in \{1,2\}$) contain only one claim $c$, $\tau_k$ is an individual signature $\sigma_k$. We initialize $\sigma_k$ as $\tau_k$ and expand it to a vector as follows, with $j$ equals to the index of $c$ in $C_k$:
		\[\tau_k[i] =  
		\begin{cases}
		\sigma_k, & \mbox{if } \mathcal{M}[i,j] = 1,\\
		\lambda, & \mbox{otherwise},
		\end{cases}
		\]
		for $i = 1, \ldots, t$.
		\item Once $\tau_1$ and $\tau_2$ are both vectors, we aggregate them, position by position, according to the incidence matrix $\mathcal{M}$:
		\[\tau[i] = \Sigma.\text{\textbf{Agg}}(C_1[\mathcal{M}_i], C_2[\mathcal{M}_i], \tau_1[i],  \tau_2[i]), \mbox{ for } i = 1, \ldots, t.\]
		\item Output $\tau$, which certifies the integrity and authenticity of  the sequence $C = C_1 \sqcup C_2$.
	\end{enumerate}
	\item \textbf{Verify}($C, \tau$) takes a claim sequence $C$ and the aggregate signature $\tau$, and outputs the set of valid claims. Computes $b_i = \Sigma.$\textbf{Verify}$(C[\mathcal{M}_i], \tau[i])$ 
	for each $1 \leq i \leq t$ and outputs the set of valid claims consisting of the union of each $C[\mathcal{M}_i]$ such that $b_i = 1$.
\end{enumerate} 

The following theorems are from Hartung et al.\cite[Section 4]{fault} and address the security and correctness of the scheme. For details regarding their proofs, see \cite{fault}. 
\begin{theorem}[Hartung et al.\cite{fault}]
	Let $\Sigma'$ be the aggregate signature scheme with list verification presented above. If $\Sigma'$ is based on a $d$-CFF, then it is correct and tolerant against up to $d$ errors. 
\end{theorem}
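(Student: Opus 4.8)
The plan is to decompose the statement into \emph{correctness} (the case with no invalid signatures) and \emph{$d$-tolerance} (the case of at most $d$ invalid signatures), and to reduce each to the defining property of the $d$-CFF matrix $\mathcal{M}$ together with the correctness of the underlying ordinary scheme $\Sigma$. For a column $j$ write $S_j=\{i: \mathcal{M}[i,j]=1\}$ for its support; the $d$-CFF property states exactly that for every $j$ and every set $D$ of at most $d$ columns with $j\notin D$ one has $S_j\not\subseteq\bigcup_{k\in D}S_k$. I would first record the elementary fact that, since the family is a $d$-CFF (so $n\geq d+1$), no column of $\mathcal{M}$ is all-zero, for otherwise the corresponding empty set would be contained in the union of any $d$ of the remaining sets; hence $\bigcup_{i=1}^{t}C[\mathcal{M}_i]$ recovers every claim of $C$.

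For correctness, suppose all $n$ signatures are valid. Each $\tau[i]$ is built by applying $\Sigma.$\textbf{Agg} to the individual signatures $\sigma_j$ with $j\in S_i$ (with $\lambda$ in the empty positions), all of which are valid, so correctness of $\Sigma$ gives $b_i=\Sigma.$\textbf{Verify}$(C[\mathcal{M}_i],\tau[i])=1$ for every row $i$; hence \textbf{Verify}$(C,\tau)$ outputs $\bigcup_{i=1}^{t}C[\mathcal{M}_i]$, which is all of $C$ by the previous remark.

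For $d$-tolerance, let $D$ be the set of indices of the invalid signatures, $|D|\leq d$, and show two inclusions. \emph{No false negatives:} for $j\notin D$, the $d$-CFF property yields a row $i^\star\in S_j\setminus\bigcup_{k\in D}S_k$; the aggregate $\tau[i^\star]$ is formed only from signatures indexed in $S_{i^\star}$, none invalid, so $b_{i^\star}=1$ and, since $\mathcal{M}[i^\star,j]=1$, the claim $c_j$ lies in $C[\mathcal{M}_{i^\star}]$ and thus in the output. \emph{No false positives:} for $j\in D$, every row $i$ with $\mathcal{M}[i,j]=1$ has the invalid $\sigma_j$ among the constituents of $\tau[i]$, hence $b_i=0$, while rows $i$ with $\mathcal{M}[i,j]=0$ do not place $c_j$ into $C[\mathcal{M}_i]$; therefore $c_j$ appears in no $C[\mathcal{M}_i]$ with $b_i=1$. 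Together these show \textbf{Verify}$(C,\tau)$ outputs exactly the valid claims, i.e.\ the scheme is tolerant against $d$ errors.

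I expect the only delicate point to be the two honest-case guarantees invoked above: that a row-aggregate of exclusively valid constituents verifies to $1$, and that an honestly formed row-aggregate containing at least one invalid constituent verifies to $0$. These are precisely the properties required of the base scheme $\Sigma$ in the Hartung et al.\ model (and one should also fix the convention that $\lambda$ verifies as valid against the empty claim subsequence), so the proof is in essence a clean reduction; all of the combinatorial weight sits in the single application of $d$-cover-freeness to the support of each good column against the at most $d$ bad columns.
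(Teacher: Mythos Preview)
Your proposal is correct and follows essentially the same argument as the paper's sketch: use the $d$-CFF property to exhibit, for each valid column $j$, a row $i^\star$ avoiding all invalid columns (no false negatives), and observe that any row containing an invalid column yields $b_i=0$ (no false positives). Your version is more careful in separating out the all-valid correctness case, verifying that no column is all-zero, and making explicit the assumptions needed on the underlying scheme $\Sigma$, but the core idea is identical.
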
	
\begin{sproof}
Considering the aggregation algorithm, we need to ensure that the verification algorithm computes the correct set of valid claims, assuming there are at most $d$ invalid ones. 
The $d$-CFF property guarantees that if a claim is valid, it will be included in an aggregation $\tau[i]$ that results in $b_i = 1$, since there will be a row $i$ that avoids all invalid claims and contains this valid claim.
 Each aggregation $\tau[i]$  that contains an invalid claim will result in $b_i = 0$.
	Therefore, $\Sigma'$.\textbf{Verify}($C, \tau$) is correct.
\end{sproof}

\begin{theorem}[Hartung et al.\cite{fault}]\label{securityProof}
If $\Sigma$ is a $(t, q, \epsilon)$-secure aggregate signature scheme, then the aggregate signature scheme with list verification above is $(t', q, \epsilon)$-secure, with $t'$ approximately the same as $t$.
\end{theorem}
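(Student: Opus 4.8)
The plan is to give a black-box security reduction and argue by contraposition. From any adversary $\mathcal{A}'$ that attacks the list-verification scheme $\Sigma'$, runs in time $t'$, makes $q$ signing queries, and wins with probability at least $\epsilon$, I would construct an adversary $\mathcal{A}$ against the underlying ordinary aggregate scheme $\Sigma$ that makes the same number $q$ of queries, runs in time $t$ with $t' \approx t$, and wins with probability at least $\epsilon$; since $\Sigma$ is assumed $(t,q,\epsilon)$-secure, no such $\mathcal{A}$ exists, hence no such $\mathcal{A}'$ exists, which is exactly the claimed $(t',q,\epsilon)$-security of $\Sigma'$. It is worth noting that the $d$-CFF property of $\mathcal{M}$ plays no role in this argument — only the syntactic shape of the $\Sigma'$ algorithms matters — so the reduction works for any incidence matrix.

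First I would describe how $\mathcal{A}$ simulates $\mathcal{A}'$'s environment. On input a challenge public key $pk$, $\mathcal{A}$ forwards $pk$ to $\mathcal{A}'$ and initializes an empty set $Q$ of queried messages. Whenever $\mathcal{A}'$ asks for a signature on a message $m_i$, $\mathcal{A}$ relays the query to its own $\Sigma$ signing oracle, adds $m_i$ to $Q$, and returns $\sigma_i = \Sigma.\textbf{Sign}(sk,m_i)$ to $\mathcal{A}'$. Because $\Sigma'.\textbf{Sign}$ is defined to be precisely $\Sigma.\textbf{Sign}$, this is a perfect simulation, and each query of $\mathcal{A}'$ costs $\mathcal{A}$ exactly one query, so the query count $q$ is preserved.

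Next, the extraction step. When $\mathcal{A}'$ halts with a forgery $(C^*,\tau^*)$, $\mathcal{A}$ runs the public verification procedure $\Sigma'.\textbf{Verify}(C^*,\tau^*)$ itself, i.e.\ it computes $b_i = \Sigma.\textbf{Verify}(C^*[\mathcal{M}_i],\tau^*[i])$ for every row $i = 1,\dots,t$ of $\mathcal{M}$. If $\mathcal{A}'$ has won, the output list contains some claim $c^* = (pk,m^*)$ with $m^* \notin Q$; by the definition of $\Sigma'.\textbf{Verify}$, membership of $c^*$ in that list forces the existence of a row $i$ with $\mathcal{M}[i,j^*]=1$, where $j^*$ is the position of $c^*$ in $C^*$, and with $b_i = 1$. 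The adversary $\mathcal{A}$ selects any such row $i$ (scanning $Q$ to find an un-queried claim, and scanning the $b_i$ to find a witnessing row) and outputs $(C^*[\mathcal{M}_i],\tau^*[i])$ as its own forgery against $\Sigma$. This is a legitimate win: $C^*[\mathcal{M}_i]$ is a well-formed claim subsequence of $C^*$ that contains $c^* = (pk,m^*)$ in position $j^*$, the message $m^*$ was never submitted to the oracle, and $\Sigma.\textbf{Verify}(C^*[\mathcal{M}_i],\tau^*[i]) = 1$ since $b_i = 1$. Thus $\mathcal{A}$ wins whenever $\mathcal{A}'$ does, so its success probability is at least $\epsilon$.

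Finally I would account for the running time: beyond running $\mathcal{A}'$, the only overhead of $\mathcal{A}$ is a single execution of $\Sigma'.\textbf{Verify}$ — that is, $t$ invocations of $\Sigma.\textbf{Verify}$ — together with linear bookkeeping over $C^*$ and the vector $(b_i)$. Hence $t = t' + O(\text{one list verification})$, which is what is meant by $t' \approx t$. The only point that genuinely needs care — and the main (mild) obstacle — is this extraction step: one must verify that $C^*[\mathcal{M}_i]$ is a syntactically valid input to $\Sigma.\textbf{Verify}$ (it is, by construction as a subsequence) and that $\mathcal{A}$ can \emph{efficiently} identify both the offending claim $c^*$ and a witnessing row $i$ without any access to $\mathcal{A}'$'s internal state; both follow because $\mathcal{A}$ knows $Q$ and can recompute every $b_i$. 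The rest is routine.
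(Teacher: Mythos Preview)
Your reduction is the standard one and is correct; the paper, however, does not actually prove this theorem. It states the result, attributes it to Hartung et al.\ \cite{fault}, and writes ``For details regarding their proofs, see \cite{fault}'' with no sketch or argument of its own. So there is nothing in the paper to compare against beyond noting that your black-box reduction is precisely the kind of argument one expects (and the one Hartung et al.\ give): simulate $\mathcal{A}'$ perfectly by relaying signing queries, then from a winning output $(C^*,\tau^*)$ extract a row $i$ with $b_i=1$ containing the fresh claim and hand $(C^*[\mathcal{M}_i],\tau^*[i])$ to the $\Sigma$-challenger. Your observation that the $d$-CFF property is irrelevant to security (only to correctness) is also right and worth keeping.

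One cosmetic point: you reuse the symbol $t$ for the number of matrix rows when you write ``$t$ invocations of $\Sigma.\textbf{Verify}$,'' which collides with the time bound $t$ in the security statement. The paper itself overloads $t$ this way, so the clash is inherited, but in a standalone write-up you may want to rename one of them.
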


\begin{example}[\todo{Fault-tolerant aggregation using CFFs}]

This example consists of $n=10$ signatures aggregated according to the $1$-CFF($5,10$) matrix $\mathcal{M}$, which allows us to identify all valid signatures as long as we have at most one invalid. For instance, if $\sigma_1$ is invalid, $\tau[1]$ and $\tau[2]$ will fail, but $\tau[3], \tau[4], \tau[5]$ prove the validity of $\sigma_i, 2\leq i \leq 10$.
\[
\mathcal{M} = \begin{pmatrix}\label{cffexample}
1 & 1 & 1 & 1& 0 & 0 & 0 & 0 & 0 &0\\
1 & 0 & 0 & 0& 1 & 1 & 1 & 0 & 0 & 0\\
0 & 1 & 0 & 0& 1 & 0 & 0 & 1 & 1 &0\\ 
0 & 0 & 1 & 0& 0 & 1 & 0 & 1 &  0  &1\\
0 & 0 & 0 & 1& 0 & 0 & 1 & 0 &  1 &1\\
\end{pmatrix}
\rightarrow
\begin{array} {l}
\tau[1] = Agg(\sigma_1, \sigma_2, \sigma_3, \sigma_4)\\
\tau[2] = Agg(\sigma_1, \sigma_5, \sigma_6, \sigma_7)\\
\tau[3] = Agg(\sigma_2, \sigma_5, \sigma_8, \sigma_9)\\
\tau[4] = Agg(\sigma_3, \sigma_6, \sigma_8, \sigma_{10})\\
\tau[5] = Agg(\sigma_4, \sigma_7, \sigma_9, \sigma_{10})\\
\end{array}
 \]
 \end{example}
 
 \todo{
 \begin{example}[Fault tolerant aggregation using CFFs based on bilinear maps]
 Consider the fault-tolerant signature aggregation that uses the CFF in Example~\ref{cffexample} and is based on aggregations using bilinear maps, as given in Example~\ref{regular_bilinear}.\\
 Let $C_1=((v_1,m_1), (v_2, m_2),\bot,\bot,\bot,\bot,\bot,\bot, (v_9,m_9),\bot)$ with associated signature $\tau_1$, and 
 $C_2(\bot, \bot,\bot,\bot,(v_5,m_5),\bot,\bot,(v_8,m_8),\bot,(v_{10},m_{10}))$ with associated signature $\tau_2$.
Since in bilinear maps the aggregated signature is the product of the corresponding signatures, using ${\cal M}$ we have:
$$\tau_1=(\sigma_1\sigma_2, \sigma_1, \sigma_2\sigma_9, \bot,\sigma_9), \ \ \  
 \tau_2=(\bot,\sigma_5,\sigma_5\sigma_8, \sigma_8\sigma_{10}, \sigma_{10}).$$
The aggregation {\bf Agg}$(C_1,C_2)$ returns $\tau=( \sigma_1\sigma_2, \sigma_1\sigma_5, \sigma_2\sigma_9\sigma_5\sigma_8, \sigma_8\sigma_{10},\sigma_9 \sigma_{10})$ which is the aggregated signature for $C=C_1\sqcup C_2$:$$C=((v_1,m_1), (v_2,m_2),\bot,\bot,(v_5,m_5),\bot,\bot,(v_8,m_8), (v_9,m_9),(v_{10},m_{10})).$$
Note that commutativity and associativity property of the cyclic groups guarantees that $\tau[3]=(\sigma_2\sigma_9)(\sigma_5\sigma_8)=
\sigma_2\sigma_5\sigma_8\sigma_9$.\\
Algorithm {\bf Verify}$(C,\tau)$ calculates $h_j=h(m_j)$ for $j\in\{1,2,5,8,9,10\}$, and  verify for every $1\leq i \leq 5$ whether  
\begin{equation}e(\tau[i],g)= \prod_{1\leq j \leq 10:{\cal M}[i,j]=1, C[j]\not=\bot} e(h_j,v_j).\label{verifyeq}\end{equation}
Assuming $\sigma_1$ is the only invalid signature, according to ${\cal M}$, Equation~(\ref{verifyeq}) will fail for $i=1,2$ since $\sigma_1$ is involved in the computations, but it will hold for $i=3,4,5$. Indeed, for example for $i=3$, using the bilinear property which implies $e(xy,z)=e(x,z)e(y,z)$ and the commutative property of cyclic groups, we get
\begin{eqnarray*}
e(\tau[3],g)&=&e( \sigma_2\sigma_9\sigma_5\sigma_8,g)
=e(h_2^{x_2}h_9^{x_9}h_5^{x_5}h_8^{x_2},g), \ \hfill {\mathrm{as\ }} \sigma_i=h_i^{x_i}\\
&=&e(h_2^{x_2},g) e(h_9^{x_9},g) e(h_5^{x_5},g) e(h_8^{x_8},g), \ \hfill {\mathrm{as}}\ e(xy,z)=e(x,z)e(y,z)\\
&=&e(h_2^{x_2},g) e(h_5^{x_5},g)e(h_8^{x_8},g), e(h_9^{x_9},g), \ \ \hfill {\mathrm{by\ commutativity\ of\ }}G_T \\
&=&e(h_2,g^{x_2}) e(h_5,g^{x_5}) e(h_8,g^{x_8})e(h_9,g^{x_9}),\ \hfill {\mathrm{bilinearity\ applied\ }} 2\times  \\
&=&e(h_2,v_2) e(h_5,v_5) e(h_8,v_8)e(h_9,v_9),  \  \hfill {\mathrm{\ as\ }}(x_j,v_j) {\mathrm{\ are\ key\ pairs}}
\end{eqnarray*}
and so we proved Equation~(\ref{verifyeq}) holds for $i=3$.
From the fact that Equation~(\ref{verifyeq}) holds for $i=3,4,5$, we can conclude $\sigma_2,\sigma_5, \sigma_8, \sigma_9$ are valid signatures for $m_2, m_5, m_8, m_9$, respectively. The CFF property guarantees that any signature not present in any of the ``valid rows'' must be invalid, so we conclude $\sigma_1$ is an invalid signature for $m_1$.

 \end{example}
  
 }

The idea of fault tolerance in signature aggregation using CFFs appeared independently 
in the master's thesis of the first author as \emph{level-d signature aggregation} \cite[Chapter 5]{masters}. A related application of CFFs to modification tolerant digital signatures can be found in \cite{thaisIndocrypt,thaisIPL}. 

\subsection{Cover-Free Family Constructions}\label{sec:CFF}
Cover-free families (CFFs) are combinatorial structures studied in the context of combinatorial group testing, and frequently used in scenarios where we need to test a set of $n$ elements to identify up to $d$ invalid ones. We use them to combine these elements into a few groups, and test the groups instead of each element.

\begin{definition}\label{CFF} A {\em set system} $\mathcal{F} = (X, \mathcal{B})$ consists of a set $X = \{x_1, \ldots, x_t\}$ with $|X| = t$, and a collection $\mathcal{B} = \{B_1, \ldots, B_n\}$ with $B_i \subseteq X, 1 \leq i \leq n,$ and $|\mathcal{B}| = n$. A {\em $d$-cover-free family}, denoted $d-$CFF$(t,n)$, is a set system such that for any subset $B_{i_0} \in \mathcal{B}$ and any other $d$ subsets $B_{i_1}, \ldots, B_{i_d} \in \mathcal{B}$, we have
\begin{equation}\label{property:cff}
B_{i_0} \nsubseteq \bigcup_{j=1}^{d}B_{i_j}.
\end{equation}	
\end{definition}	

We can represent $\mathcal{F}$ as a $t \times n$ binary incidence matrix $\mathcal{M}$ by considering the characteristic vectors of subsets in $\mathcal{B}$ as columns of $\mathcal{M}$. More precisely, $\mathcal{M}_{i,j} = 1$ if $x_i \in B_j$, and $\mathcal{M}_{i,j} = 0$ otherwise.
We will interchangeably say that $\mathcal{M}$ is $d$-CFF when its corresponding set system is $d$-CFF. 
Note that if $\mathcal{M}$ is $d$-CFF, then a matrix obtained by row \todo{permutations} and column permutations is also $d$-CFF. 

An equivalent definition of $d$-CFF is based on the existence of permutation submatrices of dimension $d + 1$ \cite{Macula}. A permutation matrix is an $n \times n$ binary matrix with exactly one ``1" per row and per column, or in other words, it is obtained by permuting the rows of the identity matrix. 

\begin{proposition}\label{id}
	A matrix $\mathcal{M}$ is $d$-CFF if and only if any set of $d+1$ columns contains a permutation sub-matrix of dimension $d+1$.
\end{proposition}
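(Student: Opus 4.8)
The plan is to prove both directions of the equivalence directly from Definition~\ref{CFF}, translating the set-theoretic cover-free condition into a statement about the incidence matrix $\mathcal{M}$. Fix any $d+1$ columns $j_0, j_1, \ldots, j_d$ of $\mathcal{M}$, corresponding to sets $B_{j_0}, B_{j_1}, \ldots, B_{j_d}$. The key observation to set up first is the following dictionary: the set $B_{j_k}$ is \emph{not} contained in $\bigcup_{\ell \neq k} B_{j_\ell}$ if and only if there is a row $i_k$ with $\mathcal{M}[i_k, j_k] = 1$ and $\mathcal{M}[i_k, j_\ell] = 0$ for all $\ell \neq k$; such a row is a ``private'' row that isolates column $j_k$ among the chosen $d+1$ columns. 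I would state this as the central lemma-like step.

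For the forward direction, assume $\mathcal{M}$ is $d$-CFF. Given any $d+1$ columns, apply the cover-free property $d+1$ times, once with each of the chosen columns playing the role of $B_{i_0}$ and the remaining $d$ playing the role of $B_{i_1}, \ldots, B_{i_d}$: this yields, for each $k \in \{0, 1, \ldots, d\}$, a private row $i_k$ isolating column $j_k$. I would then argue these rows are distinct — if $i_k = i_{k'}$ for $k \neq k'$ then that row has a $1$ in column $j_k$ (by isolation of $j_k$) and a $0$ in column $j_k$ (by isolation of $j_{k'}$), a contradiction — and that the $(d+1) \times (d+1)$ submatrix on rows $i_0, \ldots, i_d$ and columns $j_0, \ldots, j_d$ has exactly one $1$ per row among those columns (by the isolation property) and exactly one $1$ per column (row $i_k$ is the unique one of the chosen rows with a $1$ in column $j_k$, since any other chosen row $i_{k'}$ has a $0$ there). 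Hence it is a permutation submatrix, up to reordering rows.

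For the converse, suppose every set of $d+1$ columns contains a permutation submatrix of dimension $d+1$. To check the cover-free property, take any $B_{i_0}$ and any $d$ other sets $B_{i_1}, \ldots, B_{i_d}$; look at these $d+1$ columns and extract the guaranteed permutation submatrix on some $d+1$ rows. In that submatrix, the column indexed by $i_0$ has its unique $1$ in some row $r$, and every other of the $d+1$ columns has a $0$ in row $r$ (permutation matrices have a single $1$ per row). Thus $x_r \in B_{i_0}$ but $x_r \notin B_{i_j}$ for $j = 1, \ldots, d$, so $B_{i_0} \nsubseteq \bigcup_{j=1}^d B_{i_j}$, which is exactly \eqref{property:cff}.

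The argument is essentially bookkeeping, so there is no deep obstacle; the one point that needs care is the distinctness of the rows $i_0, \ldots, i_d$ in the forward direction and the verification that the extracted submatrix is genuinely a permutation matrix (one $1$ per row \emph{and} per column within the selected $d+1$ columns), rather than merely having the isolating-row property for each column. I would make sure the writeup is explicit that ``permutation submatrix of dimension $d+1$'' is understood up to permuting the $d+1$ chosen rows, matching the remark after Definition~\ref{CFF} that row and column permutations preserve the $d$-CFF property.
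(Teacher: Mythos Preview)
Your proposal is correct and follows essentially the same approach as the paper: both hinge on the dictionary that the cover-free condition for column $j_k$ against the other $d$ chosen columns is equivalent to the existence of an isolating (``private'') row, and both assemble these rows into a permutation submatrix. The paper's proof is terser, leaving the distinctness of the isolating rows and the one-$1$-per-row-and-column verification implicit, whereas you spell these out; your added care on those points is appropriate and does not constitute a different method.
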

	\begin{proof}
	Take $d+1$ columns $\{{j_0}, {j_1}, \ldots, {j_d}\}$ of $\mathcal{M}$. Property (\ref{property:cff}) is true for $j_l$ in place of $i_0$ and $J = \{{j_0}, {j_1}, \ldots, {j_d}\} \setminus j_l$ in place of $\{{i_1}, \ldots, {i_d}\}$ if and only if there exists a row $i$ where $\mathcal{M}_{i, j_l} = 1$ and $\mathcal{M}_{i, j_k} = 0$ for $j_k \in J$. This is equivalent to a permutation sub-matrix of dimension $d+1$ in columns $\{{j_0}, {j_1}, \ldots, {j_d}\}$.
\end{proof}

The next propositions state relationships between sub-matrices with respect to $d$-CFF properties. Their proofs follow directly from Definition~\ref{CFF}.

\begin{proposition} Let $\mathcal{M}$ be a matrix and let $\mathcal{M}'$ be a sub-matrix of $\mathcal{M}$ formed by some of its columns. If $\mathcal{M}$ is $d$-CFF, then $\mathcal{M}'$ is also $d$-CFF.
\end{proposition}

\begin{proposition}
	Let $\mathcal{M}$ be a matrix and let $\mathcal{M}'$ be a sub-matrix of $\mathcal{M}$ formed by some of its rows.  If $\mathcal{M}'$ is $d$-CFF, then $\mathcal{M}$ is $d$-CFF.
\end{proposition}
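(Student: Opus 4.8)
The plan is to argue directly from Definition~\ref{CFF}, exploiting the fact that passing from $\mathcal{M}$ to a row-submatrix $\mathcal{M}'$ only restricts the ground set: each set $B'_j$ of the system associated with $\mathcal{M}'$ is the intersection of $B_j$ (from $\mathcal{M}$) with the sub-collection $X'\subseteq X$ of elements indexed by the chosen rows. So the first step is to fix notation: let $X'\subseteq X$ be the elements corresponding to the rows kept in $\mathcal{M}'$, and note that for every column $j$ and every kept row $x\in X'$, the entry $\mathcal{M}'_{x,j}$ equals $\mathcal{M}_{x,j}$; equivalently $B'_j = B_j\cap X'$.

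Next I would verify property~(\ref{property:cff}) for $\mathcal{M}$. Take any $B_{i_0}$ and any $d$ further sets $B_{i_1},\dots,B_{i_d}$. Since $\mathcal{M}'$ is $d$-CFF, we have $B'_{i_0}\nsubseteq\bigcup_{j=1}^d B'_{i_j}$, so there is an element $x\in X'$ with $x\in B'_{i_0}$ and $x\notin B'_{i_j}$ for all $j=1,\dots,d$. Because $B'_\ell = B_\ell\cap X'$ for every $\ell$, this same $x$ satisfies $x\in B_{i_0}$ and $x\notin B_{i_j}$ for all $j$, whence $B_{i_0}\nsubseteq\bigcup_{j=1}^d B_{i_j}$. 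As the sets were arbitrary, $\mathcal{M}$ is $d$-CFF. (Alternatively, one can phrase the same argument through Proposition~\ref{id}: a permutation submatrix of dimension $d+1$ witnessed in $\mathcal{M}'$ on some $d+1$ columns uses $d+1$ of the kept rows, which are also rows of $\mathcal{M}$ with identical entries, so the same permutation submatrix appears in $\mathcal{M}$.)

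I do not expect any real obstacle here; the statement is essentially immediate from the definitions. The only point requiring a line of care is the bookkeeping that identifies a row index of $\mathcal{M}'$ with the corresponding row index of $\mathcal{M}$ and records that column entries — and hence the sets $B'_j$ as subsets of $B_j$ — are preserved under deleting rows. Everything else is a direct substitution into property~(\ref{property:cff}).
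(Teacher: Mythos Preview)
Your proof is correct and matches the paper's approach: the paper simply states that the result follows directly from Definition~\ref{CFF}, and your argument spells out exactly that direct verification.
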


A $1$-CFF can be constructed using Sperner's theorem.

	\begin{theorem}(Sperner~\cite{sperner})\label{sperner}
	Let $\mathcal{B}$ be a collection of subsets of $\{1, \ldots, t\}$ such that $B_1 \not\subseteq B_2$ for all $B_1, B_2 \in \mathcal{B}$. Then $|\mathcal{B}| \leq {t \choose \lfloor t/2\rfloor}$. Moreover, equality holds when $\mathcal{B}$ is the collection of all the $\lfloor t/2\rfloor$-subsets of $\{1, \ldots, t\}$.
\end{theorem}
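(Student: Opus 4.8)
The plan is to prove the stronger LYM (Lubell--Yamamoto--Meshalkin) inequality $\sum_{B \in \mathcal{B}} \binom{t}{|B|}^{-1} \le 1$ and then deduce the stated bound by a one-line estimate. First I would set up Lubell's double-counting argument: consider all $t!$ permutations $\pi$ of $\{1,\dots,t\}$, and for each such $\pi$ form its chain of prefixes $\emptyset \subset \{\pi(1)\} \subset \{\pi(1),\pi(2)\} \subset \cdots \subset \{1,\dots,t\}$. I would call a permutation $\pi$ \emph{good for} $B \in \mathcal{B}$ if $B$ occurs among these prefixes, i.e.\ $B = \{\pi(1),\dots,\pi(|B|)\}$; the number of permutations good for a fixed $B$ is exactly $|B|!\,(t-|B|)!$, since we may order the elements of $B$ arbitrarily in the first $|B|$ positions and the elements of its complement arbitrarily afterwards.

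Second, I would observe that, because $\mathcal{B}$ is an antichain, no single permutation can be good for two distinct members of $\mathcal{B}$: the prefixes along one permutation form a chain under inclusion, so if $B, B' \in \mathcal{B}$ were both prefixes of the same $\pi$, one would contain the other, contradicting $B \not\subseteq B'$. Hence the ``good permutation'' sets of the various $B \in \mathcal{B}$ are pairwise disjoint subsets of the $t!$ permutations, which yields $\sum_{B \in \mathcal{B}} |B|!\,(t-|B|)! \le t!$; dividing by $t!$ gives the LYM inequality.

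Third, I would finish with the elementary fact that $\binom{t}{k}$, viewed as a function of $k$, is maximized at $k = \lfloor t/2\rfloor$ (from the ratio $\binom{t}{k}/\binom{t}{k-1} = (t-k+1)/k$, which exceeds $1$ exactly while $k \le t/2$). Thus $\binom{t}{|B|}^{-1} \ge \binom{t}{\lfloor t/2\rfloor}^{-1}$ for every $B \in \mathcal{B}$, so $|\mathcal{B}|\,\binom{t}{\lfloor t/2\rfloor}^{-1} \le \sum_{B \in \mathcal{B}} \binom{t}{|B|}^{-1} \le 1$, i.e.\ $|\mathcal{B}| \le \binom{t}{\lfloor t/2\rfloor}$. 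For the ``moreover'' part, taking $\mathcal{B}$ to be the collection of all $\lfloor t/2\rfloor$-subsets of $\{1,\dots,t\}$ gives an antichain (no $k$-set properly contains another $k$-set) of size exactly $\binom{t}{\lfloor t/2\rfloor}$, so the bound is attained.

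There is no serious obstacle here: the only step where the antichain hypothesis is genuinely used is the disjointness claim in the second paragraph, and that argument is short. An alternative route via Dilworth's theorem --- decomposing the Boolean lattice $2^{\{1,\dots,t\}}$ into $\binom{t}{\lfloor t/2\rfloor}$ chains and noting that an antichain meets each chain at most once --- would also prove the bound, but it requires exhibiting a symmetric chain decomposition, which is more work than Lubell's counting argument, so I would prefer the LYM approach above.
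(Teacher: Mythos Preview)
Your proof is correct and is the standard Lubell argument for the LYM inequality, from which Sperner's bound follows immediately; the equality case is also handled correctly. Note, however, that the paper does not supply its own proof of this theorem: it is stated as a classical result with a citation to Sperner's original paper, so there is no in-paper argument to compare yours against.
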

\begin{corollary}\label{corollary:1cff}
	Given $n$ and $d=1$, let $t(n)$ be the smallest integer such that  $1$-CFF($t(n),n$) exists. Then, $t(n)=\min\{s : {s \choose \lfloor s/2 \rfloor}\geq n\}$.
\end{corollary}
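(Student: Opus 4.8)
The plan is to derive the Corollary directly from Sperner's theorem (Theorem~\ref{sperner}), after observing that a $1$-CFF is exactly an antichain. First I would note the equivalence of notions: a set system $\mathcal{F}=(X,\mathcal{B})$ with $|X|=s$ is $1$-CFF$(s,n)$ if and only if the collection $\mathcal{B}$ of $n$ subsets of $X$ satisfies $B_{i_0}\nsubseteq B_{i_1}$ for all distinct $i_0,i_1$; indeed, Definition~\ref{CFF} with $d=1$ is precisely this condition, since $\bigcup_{j=1}^{1}B_{i_j}=B_{i_1}$. Hence a $1$-CFF on a ground set of size $s$ with $n$ blocks exists if and only if there is an antichain of size $n$ in the Boolean lattice of subsets of an $s$-element set.

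Next I would use Sperner's theorem to translate this existence condition into an inequality on $s$. By Theorem~\ref{sperner}, the largest antichain in the subsets of $\{1,\dots,s\}$ has size exactly $\binom{s}{\lfloor s/2\rfloor}$ (the upper bound $|\mathcal{B}|\le\binom{s}{\lfloor s/2\rfloor}$ gives the necessity, and the ``moreover'' clause, giving an explicit antichain of all $\lfloor s/2\rfloor$-subsets, gives the sufficiency). Therefore a $1$-CFF$(s,n)$ exists if and only if $\binom{s}{\lfloor s/2\rfloor}\ge n$. The smallest value of $s$ for which this holds is, by definition, $\min\{s:\binom{s}{\lfloor s/2\rfloor}\ge n\}$, which is exactly the claimed formula for $t(n)$.

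A small technical point I would address to make this rigorous is monotonicity: I should check that the set $\{s:\binom{s}{\lfloor s/2\rfloor}\ge n\}$ is an ``upward-closed'' set of integers, so that taking its minimum is meaningful and coincides with the threshold. This follows because $\binom{s}{\lfloor s/2\rfloor}$ is non-decreasing in $s$ (in fact strictly increasing for $s\ge 1$), which one can see from the standard identity relating central binomial coefficients of consecutive rows, or simply by noting that any antichain of $\lfloor s/2\rfloor$-subsets of an $s$-set extends to an antichain in an $(s{+}1)$-set. Granting this, $t(n)$ is well-defined and equal to the stated minimum.

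I do not expect a serious obstacle here: the entire content is the dictionary ``$1$-CFF $=$ antichair'' plus Sperner. The only place requiring a touch of care is making sure both directions of Sperner's theorem are invoked --- the bound for why no smaller $s$ works, and the explicit optimal antichain for why $s=\min\{s:\binom{s}{\lfloor s/2\rfloor}\ge n\}$ actually admits a valid $1$-CFF with $n$ blocks (one simply picks any $n$ of the $\binom{s}{\lfloor s/2\rfloor}\ge n$ subsets of size $\lfloor s/2\rfloor$). This is the step most likely to be glossed over, so I would state it explicitly.
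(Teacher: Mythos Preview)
Your proposal is correct and follows essentially the same approach as the paper: identify the $1$-CFF condition with the Sperner (antichain) property, then invoke Theorem~\ref{sperner} for both the upper bound and the explicit construction via $\lfloor s/2\rfloor$-subsets. Your write-up is somewhat more careful than the paper's (you make the monotonicity of $\binom{s}{\lfloor s/2\rfloor}$ explicit and spell out both directions), but the substance is identical.
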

\begin{proof}
	Property (\ref{property:cff}) is equivalent to the family of sets having the Sperner property. Build each column of the matrix from the characteristic vector of a distinct $\lfloor t/2\rfloor$-subset of a $t$-set, with $t=\min\{s : {s \choose \lfloor s/2 \rfloor}\geq n\}$. Theorem~\ref{sperner} guarantees $t(n)=t$.
\end{proof}	
The value $t$ grows as $\log_2 n$ as $n \rightarrow \infty$, which meets the information theoretical lower bound on the amount of bits necessary to uniquely distinguish the $n$ inputs, yielding an optimal construction.

Other constructions of CFF exist for larger $d$; see Zaverucha and Stinson~\cite[Section 3.2]{zaverucha} for a discussion on how other combinatorial objects yield good CFF methods depending on the relation of $d$ and $n$. In particular, Porat and Rothschild~\cite{PR} give a construction that yields $t = c(d+1)^2 \log n$ for a constant $c$, which for fixed $d$ is optimal in terms of meeting a lower bound $\Theta(\log n)$ (see \cite[Theorem 3]{zaverucha}). 
Next we give generalizations of two constructions by Li et al.~\cite[Theorems 3.4 and 3.5]{monotone} that allows us to build larger $d-$CFFs from smaller ones. To the best of our knowledge, these results have only been stated for $d=2$ .

\begin{definition}[Kronecker product]\label{prod}
	Let $A_k$ be an $m_k \times n_k$ binary matrix, for $k=1,2$, and  \textbf{0} be the matrix of all zeroes with same dimension as $A_2$. The product $P = A_1 \otimes A_2$ is a binary matrix such that 
\begin{multicols}{2}
	$P = \begin{pmatrix}
	P_{1,1} & \ldots P_{1,n_1}\\
	\vdots & \vdots\\
	P_{m_1,1} & \ldots P_{m_1,n_1}
	\end{pmatrix}$
	\columnbreak
	where 
	\columnbreak $P_{i,j} = \begin{cases}
	A_2, & \mbox{if } A_{1_{i,j}} = 1, \\
	\text{\textbf{0}}, & \mbox{otherwise.}
	\end{cases}
	$
\end{multicols}	
\end{definition}

 The following theorem generalizes a construction by Li et al.~\cite{monotone} given for $d=2$.

\begin{theorem}\label{dcff}
	Let $A_1$ be a $d-$CFF$(t_1, n_1)$ and $A_2$ be a $d-$CFF$(t_2, n_2)$, then $C = A_1 \otimes A_2$ is a $d-$CFF$(t_1t_2, n_1 n_2)$.
\end{theorem}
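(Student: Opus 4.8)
The plan is to use the characterization of $d$-CFF via permutation submatrices from Proposition~\ref{id}. So I want to show that any $d+1$ columns of $C = A_1 \otimes A_2$ contain a permutation submatrix of dimension $d+1$. Index the columns of $C$ by pairs $(a,b)$ with $a \in \{1,\dots,n_1\}$ a column of $A_1$ and $b \in \{1,\dots,n_2\}$ a column of $A_2$; similarly index the rows of $C$ by pairs $(p,q)$ with $p$ a row of $A_1$ and $q$ a row of $A_2$. The defining property of the Kronecker product is that $C[(p,q),(a,b)] = A_1[p,a] \cdot A_2[q,b]$.

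First I would take an arbitrary set $S$ of $d+1$ columns of $C$, say with column-labels $(a_0,b_0), (a_1,b_1), \dots, (a_d,b_d)$. Split into two cases. \textbf{Case 1: the first coordinates $a_0, \dots, a_d$ are not all distinct among the chosen columns in a way that lets us separate them} --- more precisely, consider the multiset $\{a_0,\dots,a_d\}$. If these $d+1$ values include at most $d$ distinct values is not quite the right split; instead the clean approach is: look at the first coordinates. If all $d+1$ of them are distinct, then since $A_1$ is $d$-CFF, by Proposition~\ref{id} the columns $a_0,\dots,a_d$ of $A_1$ contain a $(d+1)\times(d+1)$ permutation submatrix, on rows $p_0,\dots,p_d$ (with $A_1[p_k,a_k]=1$ and $A_1[p_k,a_\ell]=0$ for $\ell\neq k$). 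For each $b_k$, pick any row $q_k$ of $A_2$ with $A_2[q_k,b_k]=1$ (such a row exists as long as the column is nonzero, which we may assume, since a $d$-CFF with $d\ge 1$ and $n\ge 2$ has no zero columns --- or handle the degenerate cases separately). Then the rows $(p_0,q_0),\dots,(p_d,q_d)$ of $C$ restricted to columns $(a_0,b_0),\dots,(a_d,b_d)$ form a permutation submatrix: the $(k,\ell)$ entry is $A_1[p_k,a_\ell]A_2[q_k,b_\ell]$, which is $1$ when $k=\ell$ and $0$ when $k\neq\ell$ because $A_1[p_k,a_\ell]=0$.

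\textbf{Case 2: the first coordinates are not all distinct.} Then some value $a^*$ is repeated; group the chosen columns by their first coordinate. Within the block sharing first coordinate $a^*$, all the second coordinates $b$ must be distinct (since the original columns of $C$ are distinct), and there are at most $d+1$ of them, hence at most $d+1$; but we want to apply the $d$-CFF property of $A_2$ to a set of at most $d+1$ columns --- if a block has size $m \le d+1$ then those $m$ distinct columns of $A_2$ contain an $m\times m$ permutation submatrix (a sub-case of Proposition~\ref{id}, since a set of $\le d+1$ columns in a $d$-CFF contains an $r \times r$ permutation submatrix for $r$ equal to the number of columns, as one checks directly from the definition, or one uses that $A_2$ is also $d'$-CFF for all $d' \le d$). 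The real idea: process each group independently. For the group with first coordinate value $a$, let its columns have second coordinates $b^{(1)},\dots,b^{(m)}$; these are $m \le d+1$ distinct columns of $A_2$, so they contain an $m \times m$ permutation submatrix on rows $q^{(1)},\dots,q^{(m)}$ of $A_2$. Also, since the distinct first-coordinate values number $g \le d$ (strictly less than $d+1$ because there is a repeat), those $g$ distinct columns of $A_1$ contain a $g \times g$ permutation submatrix, or more simply: for each distinct first-coordinate value $a$, there is a row $p_a$ of $A_1$ with $A_1[p_a,a]=1$ and $A_1[p_a,a']=0$ for the other distinct values $a'$ (this uses $A_1$ being $(g-1)$-CFF, hence $d$-CFF suffices since $g-1 < d$). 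Combining: for a column of $C$ with label $(a, b^{(j)})$ in the group of $a$, assign row $(p_a, q^{(j)})$. These rows are all distinct across the whole collection, and one checks the resulting $(d+1)\times(d+1)$ submatrix is a permutation matrix: columns in different groups are separated by the $A_1$-coordinate (the $p_a$ row kills other groups' columns), and columns in the same group are separated by the $A_2$-coordinate.

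The main obstacle is making the bookkeeping in Case 2 fully rigorous --- in particular cleanly stating and justifying the slightly stronger form of Proposition~\ref{id} that ``a set of $r \le d+1$ columns of a $d$-CFF contains an $r\times r$ permutation submatrix,'' and verifying that the rows chosen across different groups are genuinely distinct so that the assembled submatrix really is $(d+1)\times(d+1)$. I expect the distinctness of rows and the two-coordinate separation argument to be the crux; the degenerate cases ($n_1 = 1$, $n_2=1$, or zero columns) should be dismissed at the outset by noting a $d$-CFF$(t,n)$ with $n \le d+1$ trivially exists iff its matrix contains an $n\times n$ permutation submatrix, and the Kronecker product of two such is easily seen to contain one of dimension $n_1 n_2 \ge d+1$ when needed.
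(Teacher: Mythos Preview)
Your approach is essentially the same as the paper's: exploit the block structure of the Kronecker product, use the $d$-CFF property of $A_1$ to separate columns lying in different blocks (different first coordinates), and use the $d$-CFF property of $A_2$ to separate columns within the same block. The paper organizes this slightly differently---it does not split into cases but directly observes that the $d+1$ chosen columns lie in $l\le d+1$ blocks, finds a ``generalized permutation submatrix'' $\overline{C}=\mathrm{diag}(A_2,\ldots,A_2)$ on those $l$ blocks (from $A_1$ being $d$-CFF), and then for each chosen column $c_0$ finds a separating row inside the corresponding $A_2$-block---but the content is the same.

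Two remarks that should dissolve the obstacles you flag. First, your Case~1 is a special instance of Case~2 (each group has size~1), so the case split is unnecessary; merging them also removes the need to argue separately that columns of $A_2$ are nonzero. Second, the row-distinctness worry is automatic: if row $(p_a,q)$ has a $1$ in column $(a,b)$ and $0$ in column $(a',b')$, while row $(p_{a'},q')$ has the reverse pattern, the two rows cannot coincide. Equivalently, once you know that for each chosen column there is a row with a $1$ there and $0$'s in the other $d$ chosen columns, those $d+1$ rows are forced to be pairwise distinct, so Proposition~\ref{id} applies without further bookkeeping. The ``stronger'' form of Proposition~\ref{id} for $r\le d+1$ columns follows immediately from the fact that a $d$-CFF is also a $d'$-CFF for every $d'\le d$.
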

	\begin{proof}
	It is enough to show that every sub-matrix of $C$ formed by $d+1$ of its columns is $d$-CFF.
	We first establish the existence of a special sub-matrix in $C$. We define a \emph{block} of $C$ as any set of $n_2$ consecutive columns that were created by a single column of $A_1$ in the Kronecker product (Definition~\ref{prod}). Since $A_1$ is $d$-CFF, Proposition~\ref{id} implies that any $l$ blocks of $C$, $l \leq d+1$, contain a set of rows that after permutations are of the form 
	
	\[\overline{C} = \begin{pmatrix} 
	A_2 &  \ldots & \textbf{0}\\   
	& \ddots & \\ 
	\textbf{0} & \ldots & A_2 \end{pmatrix}\]
	
	In other words, $\overline{C}$ is a ``generalized permutation sub-matrix". Let\\ $\mathcal{C} = \{c_{0}, c_{1}, \ldots, c_{d}\}$ be $d+1$ columns indexes of $C$ and let $B$ be the sub-matrix of C restricted to these columns. These columns belong to $l$ blocks of $C$, where $l \leq d+1$. Therefore, we can find a sub-matrix $\overline{C}$  of $C$  of the form presented above with exactly $l$ blocks. 
	Let $D_1, D_2$ be such that $D_1 \cup D_2 = \{c_{1}, \ldots, c_{d}\}$, where $D_1$ consists of the columns that belong to the same block as $c_{0}$ and $D_2$ consists of all the other columns. Let $i$ be a row such that $B_{i, c_{0}} = 1$ and $B_{i, c_{k}} = 0$, for all $c_{k} \in D_1$; since $A_2$ is $d$-CFF, such $i$ exists. Because $B_{i, c_{0}} = 1$, we know that $B_{i, c_{l}} = 0$ for all $c_{l} \in D_2$ since $\overline{C}$ is a generalized permutation sub-matrix. Since the same is true for every column in $\mathcal{C}$, we obtain a permutation matrix of order $d+1$. By Proposition \ref{id}, $B$ is $d$-CFF. Since this is true for every choice of $B$, $C$ is $d$-CFF.
\end{proof}

Next we present a construction of $d$-CFFs  based on another result by Li et al. \cite{monotone} for $d=2$. This construction gives a better result than the one from Theorem \ref{dcff} for the cases where $s < \frac{t_1t_2-t_2}{t_1}$.

\begin{construction}\label{constdcff}
	Let $A_1$ be a $t_1 \times n_1$ binary matrix, $A_2$ be a $t_2 \times n_2$ binary matrix, and $B$ be a $s \times n_2$ binary matrix. 
	Create a matrix $P = B \otimes A_1$ as in Definition \ref{prod}. This results in $P$ with $n_2$ ``blocks" of $n_1$ columns each. For each column in block $i$, append the ith column of $A_2$, for $1 \leq i \leq n_2$. Call \emph{\textbf{Const1}($A_1, A_2, B$)} the matrix obtained.
\end{construction} 	
	\begin{theorem}\label{d-sum}
		Let $d \geq 2$, $A_1$ be a $d-$CFF$(t_1, n_1)$, $A_2$ be a $d-$CFF$(t_2, n_2)$, and $B$ be a $(d-1)-$CFF$(s, n_2)$. Then \emph{C := \textbf{Const1}($A_1, A_2, B$)} is a $d-$CFF$(st_1 + t_2, n_1 n_2)$.
	\end{theorem}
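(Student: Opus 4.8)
The plan is to invoke Proposition~\ref{id}: it suffices to show that any $d+1$ columns of $C := \textbf{Const1}(A_1,A_2,B)$ contain a permutation submatrix of order $d+1$, i.e.\ that for any designated ``target'' among the $d+1$ chosen columns there is a row of $C$ with a $1$ in the target column and $0$'s in the other $d$ (running over all choices of target yields the $d+1$ distinct rows of the permutation submatrix). First I would fix notation for the entries of $C$. Index the columns of $C$ by pairs $(p,q)$ with $p\in\{1,\dots,n_2\}$ (the block, i.e.\ the column of $B$ that produced it) and $q\in\{1,\dots,n_1\}$ (the position inside the block, i.e.\ the column of $A_1$). Writing $b^{(p)}$, $a_1^{(q)}$, $a_2^{(p)}$ for the corresponding columns of $B$, $A_1$, $A_2$, the column $(p,q)$ of $C$ consists of: for each $k\in\{1,\dots,s\}$, a length-$t_1$ segment equal to $a_1^{(q)}$ if $B_{k,p}=1$ and to the zero vector otherwise (these $s$ segments form the $B\otimes A_1$ part); followed by a length-$t_2$ segment equal to $a_2^{(p)}$ (the appended part). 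In particular $C$ has $st_1+t_2$ rows and $n_1 n_2$ columns, matching the statement.

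Next I would fix $d+1$ distinct columns $(p_0,q_0),(p_1,q_1),\dots,(p_d,q_d)$ of $C$ and treat $(p_0,q_0)$ as the target. Split the other indices into $D_1=\{\ell\geq 1: p_\ell=p_0\}$ (columns in the target's block) and $D_2=\{1,\dots,d\}\setminus D_1$, and set $I_2=\{p_\ell:\ell\in D_2\}$, a set of block indices with $p_0\notin I_2$ and $|I_2|\le |D_2|\le d$. The argument splits into two cases. If $|I_2|\le d-1$, then since $B$ is $(d-1)$-CFF there is a row $k^*$ of $B$ with $B_{k^*,p_0}=1$ and $B_{k^*,p}=0$ for all $p\in I_2$; restrict attention to the $k^*$-th length-$t_1$ segment of $C$. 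There the target column shows $a_1^{(q_0)}$, every column with $\ell\in D_2$ shows the zero vector (as $B_{k^*,p_\ell}=0$), and every column with $\ell\in D_1$ shows $a_1^{(q_\ell)}$. Since the chosen columns of $C$ are distinct and those with $\ell\in D_1$ all lie in block $p_0$, the indices $q_0$ and $q_\ell$ ($\ell\in D_1$) are pairwise distinct; as $A_1$ is $d$-CFF and $|D_1|\le d$, there is a row $r^*$ of $A_1$ with $(A_1)_{r^*,q_0}=1$ and $(A_1)_{r^*,q_\ell}=0$ for all $\ell\in D_1$. The row of $C$ corresponding to segment $k^*$, local row $r^*$, then has a $1$ in the target and $0$ in every other chosen column.

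For the remaining case $|I_2|=d$, note this forces $|D_2|=d$, hence $D_1=\emptyset$, so the $d+1$ chosen columns lie in $d+1$ pairwise distinct blocks $p_0,p_1,\dots,p_d$. Now use the appended part: in the last $t_2$ rows the chosen columns display the distinct columns $a_2^{(p_0)},\dots,a_2^{(p_d)}$ of $A_2$, and since $A_2$ is $d$-CFF there is a row among these with a $1$ in $a_2^{(p_0)}$ and $0$ elsewhere, which is the desired row of $C$. In both cases, for an arbitrary target we have produced a separating row, so every $d+1$ columns of $C$ contain an order-$(d+1)$ permutation submatrix; by Proposition~\ref{id}, $C$ is a $d$-CFF$(st_1+t_2,\,n_1n_2)$. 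I expect the only real subtlety to be the case division: recognizing that the target column's own block uses up one of the $d$ available slots, so separating the target from the at most $d-1$ remaining distinct blocks needs only the $(d-1)$-CFF property of $B$, while the single configuration this misses (all $d+1$ columns in distinct blocks) is exactly the one the appended copies of $A_2$ are designed to handle — together with the bookkeeping that distinctness of $C$-columns implies distinctness of the relevant columns of $A_1$ and of $A_2$, so that their $d$-CFF properties actually apply.
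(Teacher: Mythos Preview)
Your proof is correct and follows essentially the same approach as the paper's: both split into the case where the $d+1$ chosen columns lie in $d+1$ distinct blocks (handled by the appended $A_2$ part) versus at most $d$ distinct blocks (handled by combining the $(d-1)$-CFF property of $B$ with the $d$-CFF property of $A_1$). Your version is somewhat more explicit in fixing a target column first and tracking the bookkeeping that guarantees the relevant columns of $A_1$ and $A_2$ are distinct, but the structure and key ideas coincide with the paper's argument.
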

\begin{proof} 
	We need to prove that C is a $d-$CFF. Let $\mathcal{C} = \{c_{0}, c_{1}, \ldots, c_{d}\}$ be $d+1$ column indexes of $C$. Consider the $n_2$ blocks of $n_1$ consecutive columns of $C$ originating each from a single column of $B$. We divide this proof into 2 cases.\\
	\noindent
	\emph{Case 1:} The $d+1$ columns belong to $d+1$ different blocks. By the construction, each column considered contains in its lower part one distinct column from $A_2$. Since $A_2$ is $d$-CFF, Proposition 1 and 2 guarantees that $C$ is $d$-CFF.\\
	\noindent
	\emph{Case 2}: The $d+1$ columns belong to $l$ different blocks, $1 \leq l \leq d$. By considering the first part of Construction \ref{constdcff}, where we compute $B \otimes A_1$, we note that since $B$ is a $(d-1)$-CFF, Proposition \ref{id} implies we can find a generalized permutation submatrix of $C$ with a total of $l$ blocks, covering $\{c_{0}, c_{1}, \ldots, c_{d}\}$, where after permutations has the following form
	
	\[\overline{C} = \begin{pmatrix} 
	A_1 &  \ldots & \textbf{0}\\   
	& \ddots & \\ 
	\textbf{0} & \ldots & A_1 \end{pmatrix}.\] 
	
	Now let $\overline{C}_1, \overline{C}_2, \ldots, \overline{C}_l$ be the set of column indexes corresponding to each block of $\overline{C}$, and $\overline{R}_1, \overline{R}_2, \ldots, \overline{R}_l$ be sets with $t_1$ consecutive row indexes each. 
	
	Since $A_1$ is $d$-CFF, and $|\overline{C}_i \cap \mathcal{C}| \leq d+1$, Proposition \ref{id} implies that the matrix formed by column indexes $\overline{C}_i \cap \mathcal{C}$ and row indexes $\overline{R_i}$ contains a permutation submatrix which after permutations gives an identity matrix $I_i$ of dimension $|\overline{C}_i \cap \mathcal{C}|$.
	Now given the form of $\overline{C}$ we know $\overline{C}_{i_1, j_1} = 0,$ for all $i_1 \in \overline{R_i},$ for all $j_1 \in \overline{C}_j \cap \mathcal{C}$, where $1 \leq j \leq l, i \neq j$.
	
	As a consequence, there exists a submatrix of $\overline{C}$ that after permutation is of the form 
	
	\[\begin{pmatrix} 
	I_1 &  \ldots & \textbf{0}\\   
	& \ddots & \\ 
	\textbf{0} & \ldots & I_l \end{pmatrix},\] 
	\noindent	
	which is an identity matrix of dimension $d+1$. Therefore, $C$ is $d$-CFF$(st_1 + t_2, n_1 n_2)$.
\end{proof}

As a corollary, it is possible to obtain a previous result by Li et al. \cite{monotone} for the specific case of $d=2$.
\begin{corollary}\label{d2-sum}\cite[Theorem 3.5]{monotone}
Suppose there exists a $2$-CFF$(t_1, n_1)$ and a $2$-CFF$(t_2, n_2)$, then there exists a $2$-CFF$(st_1 + t_2, n_1 n_2)$ for any $s$ satisfying $s \choose {\lfloor\frac{s}{2}\rfloor}$ $\geq n_2$. 
\end{corollary}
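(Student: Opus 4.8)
The plan is to derive this as a direct specialization of Theorem~\ref{d-sum} to the case $d=2$. First I would observe that Theorem~\ref{d-sum} with $d=2$ requires three ingredients: a $2$-CFF$(t_1,n_1)$ (to play the role of $A_1$), a $2$-CFF$(t_2,n_2)$ (to play the role of $A_2$), and a $(d-1)$-CFF$(s,n_2)$, i.e.\ a $1$-CFF$(s,n_2)$ (to play the role of $B$). The first two are given by hypothesis, so the only thing that still needs to be supplied is the $1$-CFF$(s,n_2)$.

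Next I would invoke Corollary~\ref{corollary:1cff} (equivalently, Sperner's theorem, Theorem~\ref{sperner}): a $1$-CFF$(s,n_2)$ exists precisely when $s$ is large enough that ${s \choose \lfloor s/2\rfloor}\geq n_2$, since one may take the $n_2$ columns to be the characteristic vectors of $n_2$ distinct $\lfloor s/2\rfloor$-subsets of an $s$-set, and the $1$-CFF property (\ref{property:cff}) is exactly the Sperner antichain property. Hence the numerical condition ${s \choose \lfloor s/2\rfloor}\geq n_2$ stated in the corollary is exactly what guarantees that such a matrix $B$ exists.

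Finally I would apply Theorem~\ref{d-sum} with these $A_1$, $A_2$, $B$ to conclude that \textbf{Const1}$(A_1,A_2,B)$ is a $2$-CFF$(st_1+t_2,\, n_1 n_2)$, which is the assertion of the corollary and recovers the original result of Li et al.~\cite{monotone}.

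I do not anticipate a genuine obstacle here, since the corollary is a pure instantiation of the general theorem; the only non-formal point is matching the existence threshold for a $1$-CFF with the numerical condition on $s$, which is precisely the content of Corollary~\ref{corollary:1cff}. The one thing worth double-checking is that Theorem~\ref{d-sum} is stated for $d\geq 2$, so $d=2$ is indeed within its scope, and that $(d-1)$-CFFs in this case are $1$-CFFs, i.e.\ the objects characterized by Sperner's theorem rather than any degenerate case.
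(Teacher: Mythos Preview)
Your proposal is correct and follows essentially the same approach as the paper: set $d=2$ in Theorem~\ref{d-sum}, supply the required $(d-1)$-CFF $=1$-CFF$(s,n_2)$ via the Sperner construction of Corollary~\ref{corollary:1cff} (which is available precisely when ${s \choose \lfloor s/2\rfloor}\geq n_2$), and apply \textbf{Const1}$(A_1,A_2,B)$.
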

	\begin{proof}
	Let $A_1$ be a $2$-CFF($t_1,n_1$), $A_2$ be a $2$-CFF($t_2,n_2$). Let s satisfy $s \choose {\lfloor\frac{s}{2}\rfloor}$ $\geq n_2$, and $B$ consist of  $n_2$  columns of the $1$-CFF given in the proof of Corollary \ref{corollary:1cff}. By Theorem \ref{d-sum}, Const1($A_1,A_2,B$) is a $2$-CFF$(st_1 + t_2, n_1 n_2)$.
\end{proof}

\section{Our General Unbounded Scheme Based on Nested Families}\label{sec:unbounded}
If we fix a $d-$CFFs in a fault-tolerant aggregate signature \todo{scheme}, we set a bound on the number of signatures $n$ that can be aggregated, which may not be known in advance (eg. applications in secure logging and dynamic databases). 
These unbounded applications require a sequence of $d-$CFFs that allows the increase of $n$ as necessary.
Several of these applications also deal with a large amount of signatures, and it may not be possible to save each one of them individually. So besides requiring increasing size, the $d-$CFF should also take into consideration that once aggregated, the individual signatures may not be available anymore. This raises the need of a sequence of $d-$CFFs to support all these requirements.

In order to address this problem, Hartung et al.~\cite{fault} propose the notion of a fault-tolerant unbounded scheme based on what they call a \emph{monotone} family of $d$-CFFs. It consists of using a CFF incidence matrix $\mathcal{M}^{(1)}$ until its maximum $n$ is achieved, and then jump to the next matrix.
Each new matrix $\mathcal{M}^{(l+1)}$ contains $\mathcal{M}^{(l)}$ in its upper left corner, containing a matrix of zeroes below it, in a sequence that presents a monotonicity property. 

In this section, we extend the notion of unbounded aggregation and suggest a more flexible sequence of $d-$CFFs, called \emph{nested} family. We show that the nested property is enough to be able to discard individual signatures after they are aggregated. 
This allow\todo{s} us to construct an infinite sequence of $d$-cover-free families with a better compression ratio than the ones currently known for monotone families.

In the remaining of the paper, an infinite sequence $a_1, a_2, \ldots$ is compactly denoted as $(a_l)_l$ for sets and $(a^{(l)})_l$ for matrices.

\begin{definition}\label{unbounded2}
	Let $(\mathcal{M}^{(l)})_l$ be a sequence of incidence matrices of $d$-cover-free families $(\mathcal{F}_l)_l = (X_l, \mathcal{B}_l)_l$, where the number of rows and columns of $\mathcal{M}^{(l)}$ are denoted by rows(l) and cols(l), respectively. $(\mathcal{M}^{(l)})_l$ is a \emph{nested family} of incidence matrices of $d$-CFFs, if $X_l \subseteq X_{l+1}$, rows(l) $\leq$ rows(l+1), and cols(l) $\leq$ cols(l+1), and
	\[\mathcal{M}^{(l+1)}= \begin{pmatrix} \mathcal{M}^{(l)} & Y\\ Z & W \end{pmatrix}\]
where \todo{W, Y, Z are $0$-$1$ matrices of appropriate dimensions, and} each row of $Z$ is one of the rows of $\mathcal{M}^{(l)}$, or a row of all zeros, or a row of all ones.
\end{definition}

Note that a monotone family defined in~\cite{fault} is a special case of nested family, where $Z = 0$.
The authors in~\cite{fault} use monotone families to achieve unbounded aggregation in the following way. For each $1 \leq i \leq cols(l)$, if $B_i \in \mathcal{B}_l$ and $D_i \in \mathcal{B}_{l+1}$, then $B_i = D_i$\todo{, so signature $i$ is not used in new aggregations (rows corresponding to matrix $Z=0$) and for rows correspoding to $\mathcal{M}^{(l)}$ only the old aggregations rather than individual signatures are necessary to perform aggregations with the new signatures.}
In the case of nested families, instead of $B_i = D_i$ we get $B_i \subseteq D_i$ for all $1 \leq i \leq cols(l)$. The additional property requiring that the rows of $Z$ must repeat rows of $\mathcal{M}^{(l)}$, or be trivial, is what allows us to be able to only need previous aggregations and not original signatures, \todo{even though $Z$ may not be the zero matrix}. We observe that subsequences of nested families are also nested families.

Our unbounded fault-tolerant aggregate signature scheme with nested families is defined by the following algorithms. Note that \textbf{KeyGen} and \textbf{Sign} are equal to the algorithms given in page \pageref{fault-tolerant-alg}. For  \textbf{Agg} we also create a new position $\tau[0]$ in the aggregate signature $\tau$, which holds a full aggregation of all signatures considered up to that point; \todo{this is done to permit the occurrence of rows of all ones in a future matrix $Z$, and in addition, this makes algorithm \textbf{Verify} as efficient as in traditional signatures in the case  that all signatures are valid, since in this case it aborts after one call to $\Sigma$.\textbf{Verify}$(C,\tau[0])$.}
Let  $(\mathcal{M}^{(l)})_l$ be a nested family of incidence matrices of $d$-CFFs and let $\Sigma$ be a simple aggregate signature scheme that supports claim sequences, claim placeholders, and the empty signature $\lambda$.

\noindent
\textbf{Unbounded fault tolerant aggregation with nested families}	
\begin{enumerate}\label{scheme1}
\item \textbf{KeyGen}($1^\kappa$) creates a key pair $(pk, sk)$ using  \textbf{KeyGen} from $\Sigma$ and security parameter $\kappa$.
\item \textbf{Sign}($sk, m$) receives a secret key and message, and outputs the signature given by $\Sigma.$\textbf{Sign}$(sk, m)$.
		\item \textbf{Agg}($C_1, C_2, \tau_1, \tau_2$) takes two exclusive mergeable claim sequences $C_1$ and $C_2$ and corresponding signatures $\tau_1$ and $\tau_2$, and outputs the aggregate signature $\tau$ \todo{of $C_1 \sqcup C_2$}, where $|\tau| = \text{max}\{|\tau_1|,|\tau_2|\}$.
		\begin{enumerate}
			\item Let $n_k$ be the dimension of $C_k$ for $k=1,2$, and assume w.l.o.g. that $n_1 \leq n_2$. Determine $l_k$ such that cols$(\mathcal{M}^{(l_k-1)}) < n_k \leq$ cols$(\mathcal{M}^{(l_k)})$, and denote by $t_k = \text{rows}(\mathcal{M}^{(l_k)})$, $k = 1,2$. Note that $l_1 \leq l_2$ and take $\mathcal{M}$ as the submatrix of $\mathcal{M}^{(l_2)}$ consisting of the first $n_2$ columns. Note that $\mathcal{M}=  (\begin{smallmatrix}  \mathcal{M}^{(l_1)} & Y\\ Z & W \end{smallmatrix})$, for some matrices Z, Y, W satisfying the ``nesting" properties of Definition \ref{unbounded2}. For this aggregation, $\mathcal{M}$ will be the $d$-CFF matrix that plays the same role as the fixed matrix used in the bounded scheme in page \pageref{fault-tolerant-alg}.
			\item If one or both of the claim sequences $C_k$ ($k \in \{1,2\}$) contain only one claim $c$, $\tau_k$ is an individual signature $\sigma_k$. We  expand $\tau_k$  to a vector as follows, with $j$ equals to the index of $c$ in $C_k$:
			\[\tau_k[i] =  
			\begin{cases}
			\sigma_k, & \mbox{if } i = 0 \text{ or } (\mathcal{M}[i,j] = 1 \text{ and } 1 \leq i \leq t_k),\\
			\lambda, & \mbox{otherwise}.
			\end{cases}
			\]
			\item Once $\tau_1$ and $\tau_2$ are both vectors, we aggregate them position by position according to  $\mathcal{M}$. Note that by the nested family definition we have three types of row index $i$ depending on the row type of $Z$: a row of zeros, where $\mathcal{M}[i,1]$ =\ldots= $\mathcal{M}[i,n_1] = 0$ (Type 0); a row of ones, where $\mathcal{M}[i,1]$ =\ldots= $\mathcal{M}[i,n_1] = 1$ (Type 1); and a repeated row $r$ of $\mathcal{M}^{(l_1)}$, where $\mathcal{M}[i,1] = \mathcal{M}^{(l_1)}[r,1]$ =\ldots= $\mathcal{M}[i,n_1] = \mathcal{M}^{(l_1)}[r,n_1]$ (Type 2 (r)).
			First we expand $C_1$ to $\overline{C}_1$
			having the same dimension as $C_2$, i.e. $\overline{C}_1[i] = C_1[i]$ for $1 \leq i \leq n_1$, and $\overline{C}_1[i] = \perp$ for $n_1+1 \leq i \leq n_2$, then we proceed as follows.
			
			\[\tau[0] =	\Sigma.\text{\textbf{Agg}}(\overline{C}_1, C_2, \tau_1[0], \tau_2[0])\]
			For $i = 1, \ldots \todo{,}t_1:$
			\[\tau[i] = \Sigma.\text{\textbf{Agg}}(\overline{C}_1[\mathcal{M}_i], C_2[\mathcal{M}_i], \tau_1[i],  \tau_2[i]) \]
			For $i=t_1+1, \ldots, t_2:$
			\[\tau[i] =  
			\begin{cases}
			\tau_2[i], & \text{ if $i$ is Type 0},\\
			\Sigma.\text{\textbf{Agg}}(\overline{C}_1[\mathcal{M}_i], C_2[\mathcal{M}_i], \tau_1[0],  \tau_2[i]), &  \text{ if $i$ is Type 1,}\\
			\Sigma.\text{\textbf{Agg}}(\overline{C}_1[\mathcal{M}_i], C_2[\mathcal{M}_i], \tau_1[r],  \tau_2[i]), &\text{ if $i$ is Type 2 (r).}\\
			\end{cases}
			\]
			Output $\tau$.
		\end{enumerate}
	
		\item \textbf{Verify}($C, \tau$) takes a set of public key and message pairs and the aggregate signature $\tau$ and outputs the valid claims. If $\Sigma.$\textbf{Verify}$(C, \tau[0]) = 1$, output all claims, otherwise compute $b_i = \Sigma.$\textbf{Verify}$(C[\mathcal{M}_i], \tau[i])$ 
		for each $1 \leq i \leq t_2$ and output the set of valid claims consisting of the union of each $C[\mathcal{M}_i]$ such that $b_i = 1$.
\end{enumerate}

The correctness of the aggregation and verification algorithms comes from the fact that the matrices used are $d$-CFF. For the aggregation algorithm we just need to verify that the aggregated signature computed in step (c) yields the same results as if $\mathcal{M}^{(l_2)}$ was used directly on the original signatures and apply Theorem 1.
The security of the scheme comes from Theorem \ref{securityProof}, which relies on the security of the underlying aggregate scheme $\Sigma$. 

In the next example, we illustrate step (c) in more detail.
\todo{
\begin{example}[Nested property in unbounded aggregation]
Figure~\ref{tcs:example} shows how algorithm \textbf{Agg}($C_1, C_2, \tau_1, \tau_2$) deals with each type of row in part (c). 
Assume $C_1 = (\bot, c_2, \bot, \bot, c_5), C_2 = (\bot, \bot, \bot, c_4, \bot, \bot, \bot, c_8, c_9, c_{10}))$, $n_1 = 5, n_2 = 10$, and corresponding $\tau_1$ and $\tau_2$, with $\tau_1$ a vector of $(t_1+1)$ positions created according to $\mathcal{M}^{(l_1)}$ and $\tau_2$ a vector of $(t_2+1)$ positions created according to $\mathcal{M}^{(l_2)}$. In this example, we have $\mathcal{M} = \mathcal{M}^{(l_2)}$. According to step $(c)$, we start by expanding $C_1$ to $\overline{C}_1 = (\bot, c_2, \bot, \bot, c_5,\bot, \bot, \bot, \bot, \bot)$. Then construct $\tau$ as follows. We start by computing $\tau[0]$, which holds the aggregation of all signatures considered so far:
	
	\noindent
{\footnotesize $\tau[0] = \Sigma.Agg((\bot, c_2, \bot, \bot, c_5,\bot, \bot, \bot, \bot, \bot), (\bot, \bot, \bot, c_4, \bot, \bot, \bot, c_8, c_9, c_{10})),\tau_1[0], \tau_2[0]).$}
	
	For $i = 1, \ldots, t_1$ we perform regular aggregations according to $\mathcal{M}$. For $i = t_1 + 1, \ldots, t_2$ we perform aggregations according to $\mathcal{M}$ and the type of row under $\mathcal{M}^{(l_1)}$. For instance, we compute $\tau[r],\tau[a], \tau[b], \tau[c]$ using rows $\mathcal{M}_r, \mathcal{M}_a, \mathcal{M}_b, \mathcal{M}_c$ of $\mathcal{M}$, respectively, as follows:
	
	\noindent
{\footnotesize	
\begin{eqnarray*}
\tau[r] &=&  \Sigma.Agg((\bot, c_2, \bot, \bot, \bot,\bot, \bot, \bot, \bot, \bot), (\bot, \bot, \bot, c_4, \bot, \bot, \bot, c_{8}, \bot, c_{10})), \tau_1[r], \tau_2[r] )\\
	\tau[a] &=& \tau_2[a],\\
	\tau[b] &=& \Sigma.Agg((\bot, c_2, \bot, \bot, c_5,\bot, \bot, \bot, \bot, \bot), (\bot, \bot, \bot, c_4, \bot, \bot, \bot, \bot, c_9, \bot)), \tau_1[0], \tau_2[b]),\\
	\tau[c] &=&  \Sigma.Agg((\bot, c_2, \bot, \bot, \bot,\bot, \bot, \bot, \bot, \bot), (\bot, \bot, \bot, c_4, \bot, \bot, \bot, \bot, \bot, c_{10})), \tau_1[r], \tau_2[c]).
	\end{eqnarray*}}
Aggregations of types $\tau[r]$ and $\tau[a]$ already appeared when using monotone matrices, but aggregations $\tau[b]$ and $\tau[c]$ are unique to nested families. These latter types of aggregation leverage  full aggregations stored in $\tau_1[0]$ and aggregations from previous rows (exemplified by $r$). 
\end{example}
}	

\begin{figure}[t]
\todo{
	\centering
	\includegraphics[width=6cm]{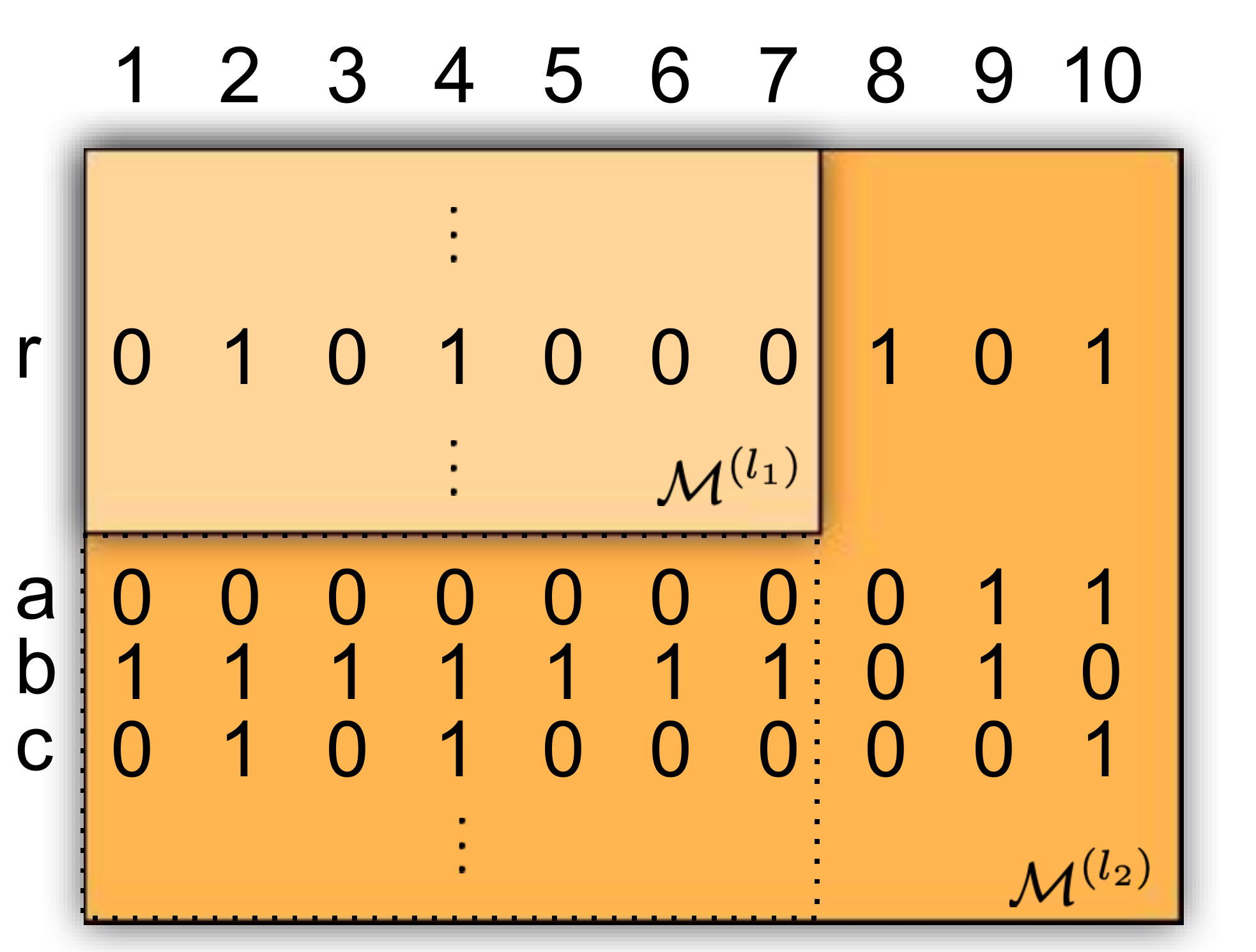}
	\caption{Example of aggregation using nested $d$-CFF matrix ${\cal M}$.}
	\label{tcs:example}
}	
\end{figure}

In the next section, we give three explicit constructions of nested families that allow us to achieve unbounded aggregation with optimal compression ratio for $d=1$ and very good compression ratios for general $d$.

\section{Construction of Nested Families with near Optimal Compression Ratios}\label{unboundedSchemes} 
Now we aim to construct a nested family of incidence matrices (Definition \ref{unbounded2}) where we can increase $n$ as necessary while avoiding to save every individual signature for further use. In this section, we propose explicit constructions of nested families for the cases where $d=1$, $d=2$, and for general values of $d$. We note that all sequences of CFF given are constructive, as they rely on ingredients that can be constructed explicitly by known methods.

\subsection{Nested Family for d=1}\label{sec:nested-d-1}

In Corollary \ref{corollary:1cff}, optimal $1$-CFFs are given based on optimal Sperner families of subsets of a $t$-set, giving a $1$-CFF($t, {t \choose \lfloor t/2 \rfloor}$) for every $t$. Since the order of elements is important for nested families, we represent the family as a tuple of sets and all we need is to order the subsets in a way to guarantee the nested properties.
Let $n_l = {l \choose \lfloor \frac{l}{2}\rfloor}$, and $\mathcal{C}_k^n$ be the list of all $k$-subsets from $\{1, \ldots, n\}$ in lexicographical order. Define $\mathcal{B}_{t}$ as follows:
\begin{align*}
\mathcal{B}_2 &= [\{1\}, \{2\}],\\
\mathcal{B}_t&=\begin{cases}
[\mathcal{B}_{t-1}[1], \ldots, \mathcal{B}_{t-1}[n_{t-1}], \mathcal{C}_{\lfloor t/2 \rfloor-1}^{t-1}[1] \cup \{t\}, \ldots, \mathcal{C}_{\lfloor t/2 \rfloor-1}^{t-1}[n_t] \cup \{t\} ],&t \mbox{ odd},\\
[\mathcal{B}_{t-1}[1] \cup \{t\}, \ldots, \mathcal{B}_{t-1}[n_{t-1}] \cup \{t\}, \mathcal{C}_{t/2}^{t-1}[1], \ldots, \mathcal{C}_{t/2}^{t-1}[n_t]  ],&t \mbox{ even},
\end{cases}
\end{align*}\label{d1const}
\noindent
for $t > 2$.

\begin{theorem}\label{theorem:d1}
	The sequence $(X_t,\mathcal{B}_t)_t$ defined above is a nested family of $1$-CFF($t, {t \choose \lfloor t/2 \rfloor}$).
\end{theorem}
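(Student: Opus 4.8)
The plan is to verify the two separate requirements in Definition~\ref{unbounded2}: first, that each $(X_t, \mathcal{B}_t)$ is genuinely a $1$-CFF$(t, \binom{t}{\lfloor t/2\rfloor})$, and second, that the sequence satisfies the nesting conditions (ground-set inclusion, monotonicity of rows and columns, and the block structure of the transition matrix with the restriction on rows of $Z$).

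For the first part, I would argue by induction on $t$ that $\mathcal{B}_t$ consists of $\binom{t}{\lfloor t/2\rfloor}$ distinct subsets of $\{1,\dots,t\}$, none contained in another. By Corollary~\ref{corollary:1cff} (equivalently, the Sperner property in Theorem~\ref{sperner}), it suffices to show all sets in $\mathcal{B}_t$ are distinct $\lfloor t/2\rfloor$-subsets of a $t$-set. For the even case $t=2m$: the first $n_{t-1}=\binom{t-1}{m-1}$ sets are $B\cup\{t\}$ for $B\in\mathcal{B}_{t-1}$, which by induction are $(m-1)$-subsets of $\{1,\dots,t-1\}$, so adjoining $t$ yields $m$-subsets of $\{1,\dots,t\}$ all containing $t$; the remaining $n_t - n_{t-1} = \binom{2m}{m}-\binom{2m-1}{m-1} = \binom{2m-1}{m}$ sets are the $m$-subsets of $\{1,\dots,t-1\}$ not already listed (here one checks $\mathcal{C}^{t-1}_{t/2}$ enumerates exactly the $m$-subsets of $\{1,\dots,t-1\}$, of which there are $\binom{2m-1}{m}$, and these avoid $t$ so are disjoint from the first group). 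The odd case $t=2m+1$ is symmetric, exchanging the roles of ``contains $t$'' and ``omits $t$'' and using $\lfloor t/2\rfloor = m$. A small bookkeeping point to nail down is that the count $n_t$ of new sets appended in each case matches the number of available lexicographic subsets $\mathcal{C}^{t-1}_{\cdot}[\,\cdot\,]$; this is the identity $\binom{t}{\lfloor t/2\rfloor} - \binom{t-1}{\lfloor (t-1)/2\rfloor}$ equals $\binom{t-1}{\lfloor t/2\rfloor - 1}$ when $t$ is odd and $\binom{t-1}{t/2}$ when $t$ is even, i.e.\ Pascal's rule.

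For the nesting part, set $X_t = \{1,\dots,t\}$, so $X_t \subseteq X_{t+1}$ is immediate, rows$(t)=t \le t+1 = $rows$(t+1)$, and cols$(t) = n_t \le n_{t+1}$ (Sperner numbers are nondecreasing). The crucial structural claim is that, writing $\mathcal{M}^{(t)}$ for the $t\times n_t$ incidence matrix of $(X_t,\mathcal{B}_t)$, we have $\mathcal{M}^{(t+1)} = \bigl(\begin{smallmatrix}\mathcal{M}^{(t)} & Y\\ Z & W\end{smallmatrix}\bigr)$ where $Z$ is a single row (since rows grow by exactly one) whose entries are determined by whether each of the first $n_t$ sets of $\mathcal{B}_{t+1}$ contains the new element $t+1$. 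By construction, when $t+1$ is even, every one of the first $n_t$ sets of $\mathcal{B}_{t+1}$ is obtained by adjoining $t+1$ to a set of $\mathcal{B}_t$, so $Z$ is the all-ones row; when $t+1$ is odd, the first $n_t$ sets of $\mathcal{B}_{t+1}$ are literally the sets of $\mathcal{B}_t$ unchanged, so $t+1$ belongs to none of them and $Z$ is the all-zeros row. In both cases $Z$ is a trivial row, which is even stronger than what Definition~\ref{unbounded2} requires (it permits rows of $Z$ to repeat rows of $\mathcal{M}^{(t)}$ as well). I also need that the first $n_t$ columns of $\mathcal{M}^{(t+1)}$ restricted to the first $t$ rows equal $\mathcal{M}^{(t)}$ exactly — equivalently $B_i^{(t+1)} \cap \{1,\dots,t\} = B_i^{(t)}$ for $i \le n_t$ — which again is immediate from the two cases (adjoining $t+1$, or doing nothing).

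I expect the main obstacle to be purely combinatorial rather than conceptual: carefully verifying that the lexicographic enumeration $\mathcal{C}^{t-1}_k$ appended in the recursion produces precisely the remaining $k$-subsets not already present and in a way consistent across consecutive levels, and that the counts line up via Pascal's identity so that exactly $n_t$ sets are present at level $t$. Everything else — the Sperner/CFF property and the three nesting conditions — follows quickly once the bijective description of $\mathcal{B}_t$ as ``all $\lfloor t/2\rfloor$-subsets of $\{1,\dots,t\}$, ordered so that the $t$-free ones (resp.\ $t$-containing ones) come in the right block'' is pinned down. I would therefore front-load the proof with a clean statement and induction establishing this description, after which both halves of the theorem are short.
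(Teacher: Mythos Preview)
Your proposal is correct and follows essentially the same approach as the paper: both verify the $1$-CFF property via Corollary~\ref{corollary:1cff} by checking that $\mathcal{B}_t$ consists of the $\lfloor t/2\rfloor$-subsets of a $t$-set, and both establish the nesting property by observing that the new bottom row $Z$ is all-zeros (when $t$ is odd) or all-ones (when $t$ is even). Your write-up is more explicit than the paper's about the inductive size argument and the Pascal-identity bookkeeping, but the underlying decomposition and key observations are identical.
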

\begin{proof}
	The proof that the construction gives $1$-cover-free families comes from Corollary \ref{corollary:1cff}, and the fact that each $\mathcal{B}_{t}$ has all distinct subsets of size ${t \choose \lfloor \frac{t}{2} \rfloor}$. 
	
	Next we need to show that the recursive construction of $(X_t,\mathcal{B}_t)_t$ gives a nested family. 
	For the case of $t$ odd, consider  $\mathcal{M}^{(t-1)}$ the incidence matrix of $\mathcal{F}_{t-1}$, and $A$ as the characteristic vectors of $\mathcal{C}_{\lfloor t/2 \rfloor-1}^{t-1}$. According to the recursive construction, the incidence matrix $\mathcal{M}^{(t)}$ for $\mathcal{F}_t$ has the following form: 
	
	\[\mathcal{M}^{(t)} = \begin{pmatrix} \mathcal{M}^{(t-1)} & A\\ 0 \ldots 0 & 1 \ldots 1 \end{pmatrix}\]
	
	For the case of $t$ even, with incidence matrix $\mathcal{M}^{(t-1)}$ from $\mathcal{F}_{(t-1)}$ and $A$ being the characteristic vectors of $\mathcal{C}_{t/2}^{t-1}$, the incidence matrix $\mathcal{M}^{(t)}$ of $\mathcal{F}_t$ has the following form: 
	\[\mathcal{M}^{(t)} = \begin{pmatrix} \mathcal{M}^{(t-1)} & A\\ 1 \ldots 1 & 0 \ldots 0 \end{pmatrix}\] 
	
	We note that both cases follow the form of matrix $Z$ prescribed by Definition~\ref{unbounded2} of nested family. We further observe that subsequences of $(X_t,\mathcal{B}_t)_t$ (``jumping" by more than one row) also satisfy the properties required for the corresponding $Z$ matrix in Definition~\ref{unbounded2}.
\end{proof}

It is easy to see that the property holds for larger increases of $t$ as well, where each row of $Z$ in this construction consists of all zeros or all ones. 
In other words, taking sub-sequences of $(X_t,\mathcal{B}_t)_t$ also gives a nested family.

\begin{theorem}\label{ratiod1}
	Let $(\mathcal{M}^{(l)})_l$ be the nested family defined in Theorem~\ref{theorem:d1}. This sequence has a compression ratio $\rho(n) = \frac{n}{\log_2 n}$.
\end{theorem}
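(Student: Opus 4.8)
The plan is to compute, for a general number of signatures $n$, the number of rows $s(n)$ used by the unbounded scheme of Section~\ref{sec:unbounded} when instantiated with this nested family, and then to estimate the ratio $n/s(n)$.

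First I would record the two combinatorial parameters of the family: by Theorem~\ref{theorem:d1}, the matrix $\mathcal{M}^{(t)}$ has exactly $\mathrm{rows}(t)=t$ and $\mathrm{cols}(t)=\binom{t}{\lfloor t/2\rfloor}$. By the rule governing the unbounded scheme — $\mathcal{M}^{(l)}$ is used as soon as the number of signatures exceeds $\mathrm{cols}(l-1)$ — for $n$ signatures we have $s(n)=t(n)$, where $t(n)=\min\{t:\binom{t}{\lfloor t/2\rfloor}\ge n\}$, which is precisely the quantity appearing in Corollary~\ref{corollary:1cff}. (Adding the extra slot $\tau[0]$ only changes $s(n)$ by $1$ and does not affect the asymptotics.)

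Next I would pin down the asymptotics of $t(n)$ using the elementary two-sided bound
\[\frac{2^t}{t+1}\ \le\ \binom{t}{\lfloor t/2\rfloor}\ \le\ 2^t,\]
where the lower bound holds because $\binom{t}{\lfloor t/2\rfloor}$ is the largest among the $t+1$ terms of $\sum_k\binom{t}{k}=2^t$. Writing $t=t(n)$, minimality gives $\binom{t-1}{\lfloor(t-1)/2\rfloor}<n\le\binom{t}{\lfloor t/2\rfloor}$; the upper bound then yields $n\le 2^t$, i.e. $t\ge\log_2 n$, while the lower bound yields $2^{t-1}/t<n$, i.e. $t<1+\log_2 n+\log_2 t$. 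A one-step bootstrap (first note $t(n)\le 2\log_2 n$ for large $n$, since $\binom{2\log_2 n}{\lfloor\log_2 n\rfloor}\ge n^2/(2\log_2 n+1)\ge n$, then substitute $\log_2 t\le 1+\log_2\log_2 n$ back in) gives $t(n)=\log_2 n+O(\log\log n)=(1+o(1))\log_2 n$; in particular $t(n)\sim\log_2 n$.

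Finally, $\dfrac{n}{s(n)}=\dfrac{n}{t(n)}$, and since $t(n)\sim\log_2 n$ this is $\Theta\!\left(\dfrac{n}{\log_2 n}\right)$, so by the definition of compression ratio the sequence achieves $\rho(n)=\dfrac{n}{\log_2 n}$, which is optimal for $d=1$ (matching the $\Theta(\log_2 n)$ lower bound on the number of rows from Sperner's theorem). The only mildly delicate points are the handling of the floor and of the $\min$ defining $t(n)$ together with the bootstrap estimate; but the crude bound $2^t/(t+1)\le\binom{t}{\lfloor t/2\rfloor}\le 2^t$ already reduces the $\Theta$ statement to routine manipulation, so I do not expect a genuine obstacle here.
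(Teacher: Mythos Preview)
Your proof is correct and follows essentially the same approach as the paper: both arguments reduce to showing $t(n)\sim\log_2 n$ and then conclude $\rho(n)=n/\log_2 n$. The only difference is that the paper invokes the standard asymptotic $\binom{t}{\lfloor t/2\rfloor}\sim 2^t/\sqrt{\pi\lfloor t/2\rfloor}$ directly, whereas you derive the same asymptotic from the cruder elementary sandwich $2^t/(t+1)\le\binom{t}{\lfloor t/2\rfloor}\le 2^t$ plus a bootstrap; this is a minor technical variation rather than a different route.
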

\begin{proof}
	The ratio is given by $\frac{n}{t} =  \frac{{t \choose \lfloor \frac{t}{2}\rfloor}}{t}$. Using the approximation \rmv{by}\todo{of} the central binomial coefficient ${t \choose \lfloor \frac{t}{2}\rfloor} \sim \frac{2^t} {\sqrt{\pi \lfloor \frac{t}{2}\rfloor}}$, we note that $t \sim \log_2 n$. Thus, $\rho(n) = \frac{n}{\log_2 n}$.
\end{proof}
Note that this compression ratio meets the information theoretical bound, where $\log_2 n$ is a lower bound on the amount of bits that are necessary to uniquely distinguish the result of the verification algorithm.

\subsection{Nested Family for $d \geq 2$}
In this section, we give three classes of constructions of nested families.
Theorems \ref{ratiod} and \ref{ratiodsum} give specific compression ratios obtained for the cases of $d\geq2$. The one from Theorem \ref{ratiodsum} is asymptotically better, but for different constants and ranges of $n$, either one may be more
suitable. Theorem \ref{ratiod2} gives yet a better asymptotic ratio for the case $d=2$.

Theorem \ref{theorem:nested-d} gives nested families via Theorem \ref{dcff}, and Theorem \ref{ratiod} provides its corresponding ratio.
\begin{theorem}\label{theorem:nested-d} 
	Let $\mathcal{M}$ be the incidence matrix of a $d-$CFF$(t,n)$ (wlog we require $\mathcal{M}_{1,1} = 1$), and set $\mathcal{M}^{(1)} = \mathcal{M}$. We define $\mathcal{M}^{(l)} = \mathcal{M} \otimes \mathcal{M}^{(l-1)}$ for $l \geq 2$. Then $(\mathcal{M}^{(l)})_l$ is a nested family of incidence matrices of $d$-CFFs.
\end{theorem}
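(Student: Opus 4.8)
The plan is to verify the two requirements of Definition~\ref{unbounded2} separately: first that every $\mathcal{M}^{(l)}$ is the incidence matrix of a $d$-CFF, and then that consecutive matrices satisfy the nesting block structure. The first part follows by induction on $l$ from Theorem~\ref{dcff}: the base case $\mathcal{M}^{(1)}=\mathcal{M}$ is a $d$-CFF$(t,n)$ by hypothesis, and if $\mathcal{M}^{(l-1)}$ is a $d$-CFF$(t^{l-1},n^{l-1})$ then $\mathcal{M}^{(l)}=\mathcal{M}\otimes\mathcal{M}^{(l-1)}$ is a $d$-CFF$(t^{l},n^{l})$. Hence $\text{rows}(l)=t^{l}$ and $\text{cols}(l)=n^{l}$, which are nondecreasing in $l$ (since $t,n\ge 1$), giving two of the numerical conditions; the inclusion $X_l\subseteq X_{l+1}$ is obtained by fixing an infinite ground set $\{x_1,x_2,\dots\}$ and setting $X_l=\{x_1,\dots,x_{t^{l}}\}$, with row $i$ of $\mathcal{M}^{(l)}$ indexed by $x_i$.

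For the nesting structure I would unfold $\mathcal{M}^{(l+1)}=\mathcal{M}\otimes\mathcal{M}^{(l)}$ according to Definition~\ref{prod}, viewing it as a $t\times n$ array of blocks $P_{i,j}$ of size $t^{l}\times n^{l}$, where $P_{i,j}=\mathcal{M}^{(l)}$ if $\mathcal{M}_{i,j}=1$ and $P_{i,j}=\mathbf{0}$ otherwise. The first $t^{l}$ rows of $\mathcal{M}^{(l+1)}$ form the block row $(P_{1,1},\dots,P_{1,n})$, and the first $n^{l}$ columns form the stacked block column $P_{1,1},\dots,P_{t,1}$; their intersection is the single block $P_{1,1}$, which equals $\mathcal{M}^{(l)}$ precisely because we assumed $\mathcal{M}_{1,1}=1$ (an assumption that is without loss of generality, since row and column permutations preserve the $d$-CFF property). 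Therefore
\[
\mathcal{M}^{(l+1)}=\begin{pmatrix}\mathcal{M}^{(l)} & Y\\ Z & W\end{pmatrix},
\]
where $Z$ consists of the blocks $P_{2,1},\dots,P_{t,1}$ stacked vertically. Each of these blocks is either $\mathcal{M}^{(l)}$ (when $\mathcal{M}_{i,1}=1$) or $\mathbf{0}$ (when $\mathcal{M}_{i,1}=0$), so every row of $Z$ is either a row of $\mathcal{M}^{(l)}$ or a row of all zeros --- a subcase of what Definition~\ref{unbounded2} permits (the all-ones option is never needed here). This verifies the block form and completes the argument.

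I expect no serious obstacle in this proof; the only point requiring care is the bookkeeping of block indices and observing that it is exactly the hypothesis $\mathcal{M}_{1,1}=1$ that forces the top-left block to be $\mathcal{M}^{(l)}$ rather than a zero block. As a remark, since the argument uses only that the \emph{outer} factor of the product has a $1$ in position $(1,1)$, the same reasoning applied to $\mathcal{M}^{\otimes k}$ --- whose $(1,1)$ entry is $\mathcal{M}_{1,1}^{k}=1$ --- shows that every subsequence of $(\mathcal{M}^{(l)})_l$ is again a nested family, so one may ``jump'' by several indices at once, mirroring the situation in Theorem~\ref{theorem:d1}.
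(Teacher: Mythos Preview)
Your proof is correct and follows essentially the same approach as the paper: invoke Theorem~\ref{dcff} for the $d$-CFF property, use $\mathcal{M}_{1,1}=1$ to place $\mathcal{M}^{(l)}$ in the top-left corner, and observe that the Kronecker block structure forces every row of $Z$ to be either a row of $\mathcal{M}^{(l)}$ or an all-zero row. Your write-up is more explicit about the block indexing and adds the useful observation on subsequences via associativity of $\otimes$, but the underlying argument is the same.
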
	
\begin{proof}
	The fact that each $\mathcal{M}^{(l)}$ is $d$-CFF comes from Theorem \ref{dcff}. We need to show that $\mathcal{M}^{(l)}$ follows the structure of nested families given in Definition \ref{unbounded2}.
	First we verify that the upper-left corner of $\mathcal{M}^{(l)}$ is equal to $\mathcal{M}^{(l-1)}$\todo{;} this is true due to the requirement of $\mathcal{M}_{1,1} = 1$. Then we check that every row under $\mathcal{M}^{(l-1)}$ is either a row of $\mathcal{M}^{(l-1)}$ itself or  a row of zeros, which is true due to the Kronecker product operation. 
\end{proof}	

\begin{theorem}\label{ratiod}
	Let $d\geq 2$. Let $(\mathcal{M}^{(l)})_l$ be the nested family defined in Theorem~\ref{theorem:nested-d} using a $d$-CFF($\overline{t}, \overline{n}$) matrix $\mathcal{M}$ with $\overline{n}>\overline{t} > 1$. Then, the sequence has increasing compression ratio $\rho(n) = \frac{n}{n^{1/c}} = n^{1 - 1/c}$, for $c = \log_{\overline{t}} \overline{n} >1$ ($c$ depending on $d$).
\end{theorem}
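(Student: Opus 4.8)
The plan is to track the two parameters of $\mathcal{M}^{(l)}$ — its number of columns $n_l = \operatorname{cols}(l)$ and its number of rows $t_l = \operatorname{rows}(l)$ — as explicit functions of $l$, and then eliminate $l$ to express $t_l$ in terms of $n_l$. By the definition $\mathcal{M}^{(l)} = \mathcal{M} \otimes \mathcal{M}^{(l-1)}$ and the dimensions of a Kronecker product, we immediately get the recurrences $n_l = \overline{n}\, n_{l-1}$ and $t_l = \overline{t}\, t_{l-1}$ with $n_1 = \overline{n}$, $t_1 = \overline{t}$, hence $n_l = \overline{n}^{\,l}$ and $t_l = \overline{t}^{\,l}$ for all $l \geq 1$.

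Next I would eliminate $l$. From $n = \overline{n}^{\,l}$ we get $l = \log_{\overline{n}} n$, and substituting into $t = \overline{t}^{\,l}$ gives $t = \overline{t}^{\,\log_{\overline{n}} n} = n^{\log_{\overline{n}} \overline{t}} = n^{1/c}$, where $c = \log_{\overline{t}} \overline{n} = 1/\log_{\overline{n}} \overline{t}$, using the change-of-base identity for logarithms. The hypothesis $\overline{n} > \overline{t} > 1$ guarantees $c > 1$, so the exponent $1/c$ lies strictly between $0$ and $1$. Therefore $\frac{n}{t} = \frac{n}{n^{1/c}} = n^{1 - 1/c}$ exactly at the values $n = n_l$, and $\rho(n) = \Theta(n^{1-1/c})$ along the sequence; since $1 - 1/c > 0$, this ratio is increasing in $n$.

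One small point to address is that the compression ratio $\rho(n)$ is, strictly speaking, defined for all $n$ via the unbounded scheme (which uses $\mathcal{M}^{(l)}$ for $\operatorname{cols}(l-1) < n \leq \operatorname{cols}(l)$), not only at the breakpoints $n = n_l$. For $n$ in such a range we have $s(n) = t_l$ with $n_{l-1} < n \leq n_l = \overline{n}\, n_{l-1}$, so $s(n) = t_l = \overline{t}^{\,l} \leq \overline{t}\cdot n_{l-1}^{1/c} < \overline{t}\cdot n^{1/c}$ and similarly $s(n) = t_l \geq n_l^{1/c} \geq n^{1/c}$; thus $s(n) = \Theta(n^{1/c})$ and $\frac{n}{s(n)} = \Theta(n^{1-1/c})$, confirming $\rho(n) = n^{1-1/c}$ in the sense defined in Section~\ref{sec:intro}. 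I do not expect any genuine obstacle here — the argument is essentially a two-line computation with Kronecker-product dimensions plus a change of logarithm base; the only thing requiring a word of care is making the asymptotic statement uniform over all $n$ rather than just over the subsequence $(n_l)_l$, which the sandwiching above handles.
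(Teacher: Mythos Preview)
Your argument is correct and follows essentially the same route as the paper: compute $n_l=\overline{n}^{\,l}$ and $t_l=\overline{t}^{\,l}$ from the Kronecker-product dimensions, eliminate $l$ via a change of logarithm base to get $t_l=n_l^{1/c}$ with $c=\log_{\overline{t}}\overline{n}$, and conclude $\rho(n)=n^{1-1/c}$. Your additional sandwiching paragraph extending the $\Theta$-statement from the breakpoints $n_l$ to all $n$ is a welcome clarification that the paper leaves implicit.
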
	
\begin{proof}
	Note that $\mathcal{M}^{(l)} = \mathcal{M} \times \mathcal{M}^{(l-1)}$ has $n_l = \overline{n} \times \overline{n}^{l-1} = \overline{n}^l$ columns and $t_l = (\overline{t}) \times (\overline{t})^{l-1} = (\overline{t})^l$ rows. Thus, $t_l = (\overline{t})^l = (\overline{t})^{\log_{\overline{n}} n_l} = (\overline{t})^{\log_{\overline{t}} n_l/\log_{\overline{t}} \overline{n}} =  n_l^{1/\log_{\overline{t}} \overline{n}} = n_l^{1/c}$ for $c= \log_{\overline{t}} \overline{n} > 1$. Consequently, the ratio is given by $\rho(n) = \frac{n}{n^{1/c}} = n^{1 - 1/c}$, where $c$ depends on $d$.
\end{proof}	
		
In Theorem \ref{theorem:nested-d-sum}, we show how to use the construction from Theorem~\ref{d-sum} to build a sequence of nested families for general $d$.
Then, in Theorems \ref{ratiodsum} and \ref{ratiod2} we give compression ratios by using specific $d$-CFF matrices. 

\begin{theorem}\label{theorem:nested-d-sum}
	Let $\mathcal{M}$ be a $d-$CFF$(t_1,n_1)$ matrix and set $\mathcal{M}^{(1)} = \mathcal{M}$. Let $B_i$ be a $(d-1)$-CFF$(s_i, n_i)$ matrix (wlog we require $B_{i_{1,1}} = 1$), for each $i \geq 1$. We recursively define $\mathcal{M}^{(l)} = \textbf{Const1}(\mathcal{M}^{(l-1)}, \mathcal{M}^{(l-1)}, B_{l-1})$, for $l \geq 2$, using Construction~\ref{constdcff}. The sequence of matrices $(\mathcal{M}^{(l)})_l$ is a nested family of incidence matrices of $d$-CFFs.
\end{theorem}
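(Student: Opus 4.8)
The plan is to verify the two requirements of Definition~\ref{unbounded2} in turn: that each $\mathcal{M}^{(l)}$ is a $d$-CFF, and that consecutive matrices of the sequence have the prescribed block form. For the first requirement I would argue by induction on $l$. The base case $l=1$ holds by hypothesis. For the inductive step, note that the recursion is only well posed when the number of columns of $B_{l-1}$ equals $\mathrm{cols}(\mathcal{M}^{(l-1)})$; calling this common value $n_{l-1}$, we have $A_1=A_2=\mathcal{M}^{(l-1)}$ a $d$-CFF$(t_{l-1},n_{l-1})$ by the inductive hypothesis and $B=B_{l-1}$ a $(d-1)$-CFF$(s_{l-1},n_{l-1})$ by assumption, so Theorem~\ref{d-sum} applies and shows $\mathcal{M}^{(l)}=\textbf{Const1}(\mathcal{M}^{(l-1)},\mathcal{M}^{(l-1)},B_{l-1})$ is a $d$-CFF$(s_{l-1}t_{l-1}+t_{l-1},\,n_{l-1}^2)$. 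In particular $\mathrm{rows}(\mathcal{M}^{(l-1)})\le\mathrm{rows}(\mathcal{M}^{(l)})$ and $\mathrm{cols}(\mathcal{M}^{(l-1)})\le\mathrm{cols}(\mathcal{M}^{(l)})$, as needed.

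The core of the argument is to inspect the rows of $\mathcal{M}^{(l)}$ produced by Construction~\ref{constdcff}. Recall that $\mathcal{M}^{(l)}$ is built by first forming $P=B_{l-1}\otimes\mathcal{M}^{(l-1)}$, which consists of $n_{l-1}$ column-blocks of $n_{l-1}$ columns each (one per column of $B_{l-1}$), and then appending below every column of block $i$ the $i$-th column of $\mathcal{M}^{(l-1)}$. I would first identify the upper-left corner: restricting to column-block $1$, the row-block of $P$ indexed by $i$ equals $\mathcal{M}^{(l-1)}$ when $B_{l-1}[i,1]=1$ and the zero matrix otherwise, so by the assumption $B_{l-1}[1,1]=1$ the top $t_{l-1}$ rows and first $n_{l-1}$ columns of $\mathcal{M}^{(l)}$ form exactly $\mathcal{M}^{(l-1)}$; this also yields $X_{l-1}\subseteq X_{l}$. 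It then remains to classify the rows of the block $Z$ lying directly below $\mathcal{M}^{(l-1)}$ (columns $1,\dots,n_{l-1}$). A row coming from a row-block $i\ge 2$ of $P$ is a row of $\mathcal{M}^{(l-1)}$ when $B_{l-1}[i,1]=1$ and an all-zeros row when $B_{l-1}[i,1]=0$. A row coming from the appended copies of the first column of $\mathcal{M}^{(l-1)}$ is \emph{constant across the whole block}, since the same column is appended to all $n_{l-1}$ columns of block $1$; hence it is an all-zeros row or an all-ones row. Thus every row of $Z$ is of one of the three admissible forms in Definition~\ref{unbounded2}.

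The step I expect to be the main obstacle is precisely this row-classification: one must keep straight the two distinct sources of rows appearing below $\mathcal{M}^{(l-1)}$ — the remaining row-blocks of the Kronecker product $P$, and the block of appended columns — and notice that the appended-column rows degenerate to constant rows once restricted to the first column-block. This is exactly the mechanism that lets a nested family exploit all-ones rows (through the $\tau[0]$ position of the scheme), which the monotone families of~\cite{fault} cannot. To finish, I would invoke the observation recorded after Definition~\ref{unbounded2} that subsequences of nested families are again nested families, so the same conclusion holds when one ``jumps'' over several indices, which is what the unbounded scheme of Section~\ref{sec:unbounded} requires.
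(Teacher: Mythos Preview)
Your proposal is correct and follows essentially the same route as the paper: you invoke Theorem~\ref{d-sum} for the $d$-CFF property, use $B_{l-1}[1,1]=1$ to pin down the upper-left corner, and then split the rows of $Z$ into those coming from the Kronecker-product part (rows of $\mathcal{M}^{(l-1)}$ or all zeros) and those coming from the appended first column of $\mathcal{M}^{(l-1)}$ (constant rows). The paper's proof is organized identically, with a small diagram of the first column-block in place of your verbal description; your added remarks about $X_{l-1}\subseteq X_l$ and subsequences are welcome but not strictly needed here.
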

\begin{proof}
	The fact that $\mathcal{M}^{(l)}$ is $d-$CFF comes from Theorem \ref{d-sum}. Now we need to show that $\mathcal{M}^{(l)}$ follows the definition of nested family given in Definition \ref{unbounded2}. First, we verify that the upper left corner of $\mathcal{M}^{(l)}$ is equal to $\mathcal{M}^{(l-1)}$ due to fixing $B_{l-1_{1,1}} = 1$. Then we need to check that every partial row of $\mathcal{M}^{(l)}$ under $\mathcal{M}^{(l-1)}$ is either a row of $\mathcal{M}^{(l-1)}$, a row of zeros, or a row of ones.  We note that the first $n_{l-1}$ columns on the left side of $\mathcal{M}^{(l)}$ will be of the following form, with $c_1$ equals to the first column of $\mathcal{M}^{(l-1)}$ repeated $n_{l-1}$ times, and \textbf{0} equals to an all zero matrix:
	
	\begin{table}[h!]
		\centering
		\renewcommand{\arraystretch}{0.5}
		\begin{tabular}{|c|c|c|}
			\hline
			\multicolumn{3}{|c|}{$\mathcal{M}^{(l-1)}$} \\\hline
			\multicolumn{3}{|c|}{\tiny\vdots}                 \\ \hline
			\multicolumn{3}{|c|}{ \textbf{0}}          \\ \hline
			\multicolumn{3}{|c|}{\tiny\vdots}                \\ \hline
			\multicolumn{3}{|c|}{$\mathcal{M}^{(l-1)}$} \\ \hline
			\multicolumn{3}{|c|}{\tiny\vdots}                \\ \hline
			\small $c_1[0]$        & \tiny \ldots        &  \small $c_1[0]$        \\
			\tiny \vdots        & \tiny \ldots        & \tiny \vdots        \\
			\small $c_1[t_{l-1}]$        & \tiny \ldots        & \small $c_1[t_{l-1}]$        \\\hline
		\end{tabular}
	\end{table}
	Note that the rows of the constructed matrix are divided into two parts, in the top part we apply Kronecker product and in the bottom part we append the columns of $\mathcal{M}^{(l-1)}$. Therefore, partial rows under the $\mathcal{M}^{(l-1)}$ in the left corner in the second part are repetition of the same column $c_1$, the first column of  $\mathcal{M}^{(l-1)}$.
	In the top part, the rows are either equal to the rows of $\mathcal{M}^{(l-1)}$ or of zero matrices; in the bottom part of the construction we clearly obtain rows of all zeroes and/or all ones. This matches the requirements of matrix $Z$ in Definition \ref{unbounded2}.
\end{proof}

\begin{theorem}\label{ratiodsum}
	Let $d\geq 2$. Let $(\mathcal{M}^{(l)})_l$ be the nested family defined in Theorem~\ref{theorem:nested-d-sum} using a $d$-CFF($\overline{t}, \overline{n}$) matrix $\mathcal{M}$ with $\overline{n}>\overline{t}$. Then, there exists a $(d-1)$-CFF $B_l$ such that the sequence $(\mathcal{M}^{(l)})_l$ has ratio $\rho(n) = \frac{n}{(b\log_2 n)^{\log_2\log_2 n + D}}$ for constants  $b = \frac{2d^2 \ln \overline{n}}{\log_2 \overline{n}}$, and $D = 1-\log_2\log_2 \overline{n}$.
\end{theorem}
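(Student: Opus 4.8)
The plan is to instantiate the construction of Theorem~\ref{theorem:nested-d-sum} with a good choice of the auxiliary $(d-1)$-CFFs and then estimate the compression ratio of the resulting sequence. First I would fix the given $d$-CFF$(\overline t,\overline n)$ matrix $\mathcal{M}$ (wlog with $\mathcal{M}_{1,1}=1$), set $\mathcal{M}^{(1)}=\mathcal{M}$, and for every $l\ge 1$ let $B_l$ be a $(d-1)$-CFF on exactly $n_l:=\mathrm{cols}(\mathcal{M}^{(l)})$ columns, taken from a good explicit construction of fixed order, concretely the Porat--Rothschild construction discussed in Section~\ref{sec:CFF}, after a row/column permutation placing a $1$ in position $(1,1)$. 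With these ingredients, Theorem~\ref{theorem:nested-d-sum} already guarantees that $\mathcal{M}^{(l)}=\textbf{Const1}(\mathcal{M}^{(l-1)},\mathcal{M}^{(l-1)},B_{l-1})$, $l\ge2$, is a nested family of $d$-CFFs; so the entire content of the theorem is the estimate of $n_l/\mathrm{rows}(\mathcal{M}^{(l)})$.

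For the parameter bookkeeping I would invoke Theorem~\ref{d-sum}: if $\mathcal{M}^{(l)}$ is a $d$-CFF$(t_l,n_l)$ and $B_l$ is a $(d-1)$-CFF$(s_l,n_l)$, then $\mathcal{M}^{(l+1)}$ is a $d$-CFF$(s_lt_l+t_l,\,n_l^2)$. Starting from $n_1=\overline n$, $t_1=\overline t$, this unwinds to
\[
n_l=\overline n^{\,2^{\,l-1}},\qquad t_l=\overline t\prod_{j=1}^{l-1}(s_j+1).
\]
The decisive step is to re-express the level $l$ in terms of $n=n_l$: from $2^{\,l-1}=\log_2 n/\log_2\overline n$ one gets $l=\log_2\log_2 n-\log_2\log_2\overline n+1=\log_2\log_2 n+D$ with $D=1-\log_2\log_2\overline n$, a genuine constant because $\overline n>\overline t\ge d+1>2$. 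For the chosen $B_l$ the Porat--Rothschild bound yields $s_l=O(d^2\ln n_l)$, and with the appropriate constant this reads $s_l\le b\log_2 n_l$ for $b=2d^2\ln\overline n/\log_2\overline n$ as in the statement; the reverse estimate $s_l=\Omega(\log_2 n_l)$ holds automatically, since a $(d-1)$-CFF with $d-1\ge1$ is in particular a Sperner family and hence (Theorem~\ref{sperner}) needs $\Omega(\log_2 n_l)$ rows.

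Finally I would substitute $n_j=\overline n^{\,2^{\,j-1}}$ and the estimates on $s_j$ into $t_l=\overline t\prod_{j=1}^{l-1}(s_j+1)$, absorb $\overline t$ and the additive $1$'s into the implied constants, and simplify the telescoped product of the factors $b\log_2 n_j$, $1\le j\le l-1$, using $l=\log_2\log_2 n+D$; this gives $\mathrm{rows}(\mathcal{M}^{(l)})=\Theta\big((b\log_2 n)^{\log_2\log_2 n+D}\big)$, and therefore $\rho(n)=\Theta\big(n/(b\log_2 n)^{\log_2\log_2 n+D}\big)$, with the hypothesis $\overline n>\overline t$ ensuring $\rho(n)\to\infty$. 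I expect the main obstacle to be exactly this last estimate: the telescoped product must be controlled tightly, both from above and from below, after the double-logarithmic substitution, which means tracking the constants in the $(d-1)$-CFF size $s_j$ and the lower-order Stirling-type terms carefully enough to land on the stated $\Theta$ rather than merely an $O$/$\Omega$ pair. A minor additional point to address is that the family $(B_l)_l$ is genuinely constructive level by level, as claimed at the start of the section.
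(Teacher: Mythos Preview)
Your proposal is correct and follows essentially the same approach as the paper: both choose $B_l$ via the Porat--Rothschild construction, derive $n_l=\overline n^{2^{l-1}}$ and $t_l=\overline t\prod_{j=1}^{l-1}(s_j+1)$, re-express $l$ as $\log_2\log_2 n+D$, and substitute to obtain $t_l=\Theta\big((b\log_2 n)^{\log_2\log_2 n+D}\big)$. Your treatment is in fact slightly more careful than the paper's, since you explicitly justify the lower bound on $s_l$ via the Sperner bound rather than silently asserting a two-sided $\Theta$ for the Porat--Rothschild parameters.
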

\begin{proof}
	Take $\mathcal{M}^{(1)} = \mathcal{M}$. Take $B_l$ as a $(d-1)$-CFF$(s_l, n_l)$ constructed using Porat and Rothschild \cite{PR} for every $l$, which gives a $s_l$ that is $\Theta(d^2 2^{l-1} \ln  \overline{n})$.
	We observe that $\mathcal{M}^{(l)}$ has $n_l = \overline{n}^{2^{l-1}}$ columns and $t_l = \overline{t} \prod_{i=1}^{l-1} (s_i + 1)$ rows. Therefore, $t_l$ is $\Theta(2^{l^2} d^{2(l-1)} (\ln \overline{n})^{l-1})$ and by using the fact that $2^{l-1} = \log_{\overline{n}} n_l$, $l-1 = \log_2 \log_{\overline{n}} n_l$, and therefore $l= \log_2 \frac{\log_2 n_l}{\log_2 \overline{n}} + 1$, we see that  $t_l$ is $\Theta((2^l d^2 \ln \overline{n})^l) =  \Theta((( \frac{\log_2 n_l}{\log_2 \overline{n}} \times 2) d^2 \ln \overline{n})^l) = \Theta((b \log_2 n_l)^{\log_2 \log_2 n_l +D})$ for constants $b = \frac{2d^2 \ln \overline{n}}{\log_2 \overline{n}}$, and $D = 1-\log_2\log_2 \overline{n}$.
	Therefore, $\rho(n) = \frac{n}{(b\log_2 n)^{\log_2\log_2 n + D}}$. 
\end{proof}	

The next theorem improves the ratio from Theorem \ref{ratiodsum} for $d=2$ by using optimal $1$-CFFs given by Corollary \ref{corollary:1cff}.

\begin{theorem}\label{ratiod2}
	Let $(\mathcal{M}^{(l)})_l$ be the nested family defined in Theorem~\ref{theorem:nested-d-sum} using a $2$-CFF($\overline{t}, \overline{n}$) matrix $\mathcal{M}$ with $\overline{n}>\overline{t}$, 
and each $B_i$ is a $1$-CFF given by Corollary \ref{corollary:1cff} for $n = \overline{n}^{2^{i-1}}$, $i \geq 1$. Then, the sequence $(\mathcal{M}^{(l)})_l$ has increasing ratio $\rho(n) = \frac{n}{(2\log_2 n)^{\log_2 \log_2 n +D}}$, for constant $D = 1-\log_2 \log_2 \overline{n}$. 
\end{theorem}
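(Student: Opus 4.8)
The plan is to follow the same template used in the proof of Theorem~\ref{ratiodsum}, but substitute the optimal $1$-CFF sizes from Corollary~\ref{corollary:1cff} in place of the Porat--Rothschild sizes. Recall that Theorem~\ref{theorem:nested-d-sum} builds $\mathcal{M}^{(l)} = \textbf{Const1}(\mathcal{M}^{(l-1)}, \mathcal{M}^{(l-1)}, B_{l-1})$, so by Theorem~\ref{d-sum} the matrix $\mathcal{M}^{(l)}$ is a $2$-CFF with $n_l = \overline{n}^{2^{l-1}}$ columns and $t_l = \overline{t}\,\prod_{i=1}^{l-1}(s_i+1)$ rows, where $s_i$ is the number of rows of $B_i$, a $1$-CFF on $n_i = \overline{n}^{2^{i-1}}$ columns. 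First I would record, from Corollary~\ref{corollary:1cff} and the central binomial coefficient estimate ${s\choose \lfloor s/2\rfloor}\sim 2^s/\sqrt{\pi\lfloor s/2\rfloor}$, that the optimal such $B_i$ has $s_i = \Theta(\log_2 n_i) = \Theta(2^{i-1}\log_2 \overline{n})$, i.e.\ $s_i + 1 = \Theta(2^{i}\log_2\overline{n})$ (absorbing the factor of $2$ and lower-order $\log$ terms into the $\Theta$).

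Next I would substitute this into the product for $t_l$. We get
\[
t_l = \overline{t}\prod_{i=1}^{l-1}(s_i+1) = \Theta\!\left(\prod_{i=1}^{l-1} 2^{i}\log_2\overline{n}\right)
= \Theta\!\left(2^{\binom{l}{2}} (\log_2\overline{n})^{l-1}\right),
\]
where the $\overline{t}$ factor and the $(\log_2\overline{n})$-versus-$(\log_2\overline{n})/\log_2\overline{n}$ bookkeeping are all constants absorbed into $\Theta$. Following the change of variables in Theorem~\ref{ratiodsum}, from $n_l = \overline{n}^{2^{l-1}}$ we have $2^{l-1} = \log_{\overline{n}} n_l = \log_2 n_l/\log_2\overline{n}$, hence $l - 1 = \log_2\log_{\overline{n}} n_l = \log_2\log_2 n_l - \log_2\log_2\overline{n}$ and $l = \log_2\log_2 n_l + D$ with $D = 1 - \log_2\log_2\overline{n}$. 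Then, mimicking the step $t_l = \Theta((2^l \cdots)^l)$ from the earlier proof but now \emph{without} the $d^2\ln\overline{n}$ factor, I would rewrite $t_l = \Theta((2^{l-1}\log_2\overline{n})^{l}) = \Theta\big((2\cdot 2^{l-1}\cdot (\log_2\overline{n}/2))^l\big)$; since $2^{l-1} = \log_2 n_l/\log_2\overline{n}$, the base becomes $\Theta(2\log_2 n_l / \log_2\overline{n} \cdot \log_2\overline{n}) = \Theta(2\log_2 n_l)$ up to the constant, giving $t_l = \Theta\big((2\log_2 n_l)^{\log_2\log_2 n_l + D}\big)$, so $\rho(n) = n/t(n) = \frac{n}{(2\log_2 n)^{\log_2\log_2 n + D}}$.

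The main obstacle — such as it is — is purely bookkeeping: getting the constant inside the base to come out as exactly $2$ rather than some other explicit constant, since the statement pins down $b = 2$ (versus $b = 2d^2\ln\overline{n}/\log_2\overline{n}$ in Theorem~\ref{ratiodsum}). This requires being careful that replacing Porat--Rothschild (which contributes $\Theta(d^2 2^{i-1}\ln\overline{n})$ per factor) with the optimal $1$-CFF (which contributes $\Theta(2^{i-1}\log_2\overline{n})$, i.e.\ drops the $d^2$ and trades $\ln$ for $\log_2$) is exactly what kills the $d^2\ln\overline{n}/\log_2\overline{n}$ part of $b$, leaving $b=2$. I would also verify monotonicity of $\rho(n)$ — that $t_l$ grows slower than $n_l$, so the ratio increases with $l$ — which is immediate since $(2\log_2 n)^{\log_2\log_2 n + D}$ is quasi-polynomial in $\log n$ while $n$ is exponential in $\log n$. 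Finally I would note that the nested-family property is already supplied by Theorem~\ref{theorem:nested-d-sum} and needs no re-proof here.
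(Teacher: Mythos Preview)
Your proposal is correct and follows essentially the same approach as the paper's own proof: both compute $n_l=\overline{n}^{2^{l-1}}$ and $t_l=\overline t\prod_{i=1}^{l-1}(s_i+1)$, invoke Corollary~\ref{corollary:1cff} to get $s_i\simeq 2^{i-1}\log_2\overline{n}$, reduce the product to a $\Theta$-expression of the form $(2^l\log_2\overline{n})^l$, and then substitute $l=\log_2\log_2 n_l+D$ to obtain $t_l=\Theta\big((2\log_2 n_l)^{\log_2\log_2 n_l+D}\big)$. Your extra remarks on why the constant in the base collapses to $2$ and on the monotonicity of $\rho(n)$ are reasonable elaborations but not a different method.
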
	
\begin{proof}
	Take $B_i$ as a $1$-CFF$(s_i, n_i)$ constructed using Corollary \ref{corollary:1cff}, which gives a $s_i \simeq 2^{i-1}s_1 \simeq 2^{i-1} \log_2 \overline{n}$.
	We observe that $\mathcal{M}^{(l)}$ has $n_l = \overline{n}^{2^{l-1}}$ columns and $t_l = \overline{t} \prod_{i=1}^{l-1} (s_i + 1)$ rows. Then $t_l \simeq \overline{t} \prod_{i=1}^{l-1} (2^{i-1} \log_2 \overline{n}+ 1)$, which is $\Theta(2^{l^2} (\log_2 \overline{n})^{l-1})$. Now we use the fact that $2^{l-1} = \log_{\overline{n}} n_l$, $l-1 = \log_2 \log_{\overline{n}} n_l$, and therefore $l= \log_2 \frac{\log_2 n_l}{\log_2 \overline{n}} + 1$. 
	Consequently, $t_l$ is $\Theta((2^l \log_2 \overline{n})^l) = \Theta((( \frac{\log_2 n_l}{\log_2 \overline{n}} \times 2) \log_2 \overline{n})^l) = 
	\Theta((2\log_2 n_l )^{\log_2 \log_2 n_l +D})$, and the compression ratio $\rho(n) = \frac{n}{(2\log_2 n)^{\log_2 \log_2 n +D}}$ for constant $D = 1-\log_2 \log_2 \overline{n}$.
\end{proof}

\todo{
\subsection{Cost comparison of fault-tolerant and traditional signature aggregation}
In this section, we draw a brief comparison to illustrate the tradeoff between fault-tolerance and efficiency of the aggregate signature scheme. In Table~\ref{table-x}, we show a comparison of relative signature size and operation times among three signature aggregation scenarios. The value $n$ in the table refers to $n = |C|$, the number of signatures comprising the aggregation. The first scenario is traditional signature aggregation, which has zero fault-tolerance. The last scenario uses no aggregation, so that all $n$ signatures need to be transmitted and individual verification must be performed. The intermediate scenario in the middle of the table is the proposed fault-tolerant scheme that admits a constant number $d$ of invalid signatures. In this case, Table~\ref{table-x} shows, using a simple analysis of the given algorithms in page~\pageref{fault-tolerant-alg}, that signature size, aggregation time and verification time get more expensive by a factor of $(t+1)$, which depends on $n$ and $d$, as shown in Table~\ref{table-y}. We note that verification time for aggregation of valid signatures is not increased since a traditional aggregation is always kept at $\tau[0]$ and is checked first.  
}
\begin{table}[h]
\todo{
	\begin{center}
	\caption{Cost comparisons for fault-tolerant aggregation of signatures.}
	\ \\
	\begin{tabular}{l|ccc}
		             & \begin{tabular}[c]{@{}c@{}}Traditional\\ Aggregation\end{tabular}
		             & \begin{tabular}[c]{@{}c@{}}Fault-tolerant\\ aggregation\\ for fixed $d$\end{tabular} & \begin{tabular}[c]{@{}c@{}}No\\ aggregation\end{tabular} \\ \hline
		Signature size & $|\sigma|$ & $(t+1)|\sigma|$ & $n|\sigma|$           \\ \hline
		Time \textbf{KeyGen} & $x_1$     & $x_1$      & $x_1$                 \\ \hline
		Time \textbf{Sign}  & $x_2$    & $x_2$   & $x_2$  \\ \hline
		Time \textbf{Agg}  & $x_3$   & \begin{tabular}[c]{@{}c@{}}$(t+1)x_3$\end{tabular}  & N/A  \\ \hline
		Time \textbf{Verify}  & $x_4$   & $\begin{cases}
			x_4, \text{\ if no faults}\\
			\leq (t+1)x_4,\ \text{o/w}
		\end{cases}$
	& $n \cdot x_4$  \\ \hline
		Fault tolerance  & 0  & constant $d << n$  & $n$  \\ \hline
	\end{tabular}
\label{table-x}
\end{center}
}
\end{table}

\begin{table}[h]
\todo{
\begin{center}
	\caption{Values for $t$ based on $n$ and $d$.}	
	\	\\
	\begin{tabular}{l|l}
		$\mathbf{d}$& $\mathbf{t}$ \\ \hline
		$d=1$ (Theorem~\ref{ratiod1})        &$\min\{t : {t \choose \lfloor t/2 \rfloor}\geq n\} \approx \log n$    \\ \hline
		$d=2$ (Theorem~\ref{ratiod2})       &    $(2\log_2 n)^{\log_2 \log_2 n +D}$        \\ \hline
		General $d$ (Theorem~\ref{ratiodsum})  &  $(b\log_2 n)^{\log_2\log_2 n + D}$          \\ \hline
	\end{tabular}
\label{table-y}
\end{center}
}
\end{table}

\todo{
Tables~\ref{table-x} shows a clear trade-off between fault-tolerance and efficiency.  
Fault-tolerant aggregation allows for invalid signatures  without the prohibitive costs of using no aggregation, considering that in the second column the signature times and sizes grow by a factor close to $\log n$ with respect to the first column, while in third column the costs grow by a factor of $n$.

The next example considers the short signature scheme BLS~\cite{BLS}, which is based on bilinear maps and allows aggregation~\cite{Boneh}. The size of the signature comes from the BLS Standard Draft to the Internet Engineering Task Force (IETF) in~\cite{BLSIETF}.

\begin{example}
Using BLS aggregation with signatures of size $|\sigma| = 48$ bytes and fault tolerance $d=1$, the fault-tolerant aggregation of $1700$ signatures will have $14 \times 48 = 672$ bytes, while the fault-tolerant aggregation of  $1300000$ signatures will have $24 \times 48 = 1152$ bytes. Similarly, for $1700$ signatures the fault-tolerant aggregation and verification times will be multiplied by $14$ with respect to traditional aggregated signatures; while for  $1300000$ signatures this time will be multiplied by $24$. For verification time, these costs are a worst-case scenario, since when all signatures are valid then the time of fault-tolerant verification is the same as regular verification, and when some are invalid, the computation of the partial aggregations is typically faster than the single aggregation, which means $(t+1) x_4$ is an overestimation of the verification time of fault-tolerant signatures. Indeed, remember that in traditional bilinear map aggregation, the time $x_4$ involves $n$ bilinear map evaluations and $n-1$ multiplications. 
Now for the fault-tolerant aggregation, first we need to do $n$ bilinear map evaluations and then do $n-1$ multiplications to calculate $\tau[0]$ and for each row $1\leq i\leq t$ of the CFF matrix, we need to do $w_i-1$ multiplications, where $w_i$ is the number of ones in row $i$. Thus, if $w$ is the average number of ones per row, the number of multiplications used is $n-t+\sum_{i=1}^t w_i= n-t+t w=n+t(w-1)$. In the case of $d=1$, it can be verified that the nested family construction in Section~\ref{sec:nested-d-1} yields $w=n/2$, thus verification time increases by a factor of approximately $\log(n)/2$ rather than by $\log(n)$. In the given example, when one or more invalid 
aggregates are present, fault tolerant verification would be only $7$ times more costly than regular aggregation for $1700$ signatures and $12$ times more costly for $1300000$ signatures.
\end{example}
}
\section{Generalized CFFs to support corrupted aggregation tuple }\label{sec:generalized}

A $(d; \lambda)$-CFF generalizes $d$-CFFs requiring that $\lambda$ elements in a column are not covered by the union of any other $d$ columns \cite{monotone}.

	\begin{definition}\label{dlCFF}
		A set system $\mathcal{F} = (X, \mathcal{B})$ is said to be $(d; \lambda)$-cover free family ($(d; \lambda)$-CFF) if for any subset $B_{i_0} \in \mathcal{B}$ and any other $d$ subsets $B_{i_1}, \ldots, B_{i_d} \in \mathcal{B}$, we have
		\begin{equation}\label{property:dlcff}
		\bigg| B_{i_0} \Big\backslash \bigg(\bigcup_{j=1}^{d}B_{i_j}\bigg)\bigg| \geq \lambda.
		\end{equation}	
	\end{definition}	

Note that a $d$-CFF is a $(d; 1)$-CFF. 

Suppose that we use a nested family of ($d;\lambda$)-CFFs in Scheme 1.
For the verification algorithm, suppose that up to $\lambda-1$ signatures
got corrupted. Definition \ref{dlCFF} guarantees that removing $\lambda-1$ rows of the matrix gives a $(d;1)$-CFF matrix. So the verification algorithm can still identify the invalid signatures, even without $\lambda-1$ of the
aggregations.
	
We note we can generalize the $d$-CFF construction in Theorem \ref{dcff} to generate ($d; \lambda$)-CFFs, which can be used later to generate ($d; \lambda$) nested families. 
	
	\begin{theorem}\label{dlambdacff}
		Let $A_1$ be a $(d;\lambda_1)$-CFF$(t_1, n_1)$ and $A_2$ be a $(d;\lambda_2)$-CFF$(t_2, n_2)$, then $C = A_1 \otimes A_2$ is a $(d;\lambda_1\lambda_2)$-CFF$(t_1t_2, n_1 n_2)$.
	\end{theorem}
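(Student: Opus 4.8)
The plan is to argue directly from Definition~\ref{dlCFF}, reusing the block structure exploited in the proof of Theorem~\ref{dcff} but now bookkeeping how many ``witness'' rows survive in each of the two factors. Index the columns of $C = A_1 \otimes A_2$ by pairs $(p,q)$ with $1 \le p \le n_1$, $1 \le q \le n_2$, and its rows by pairs $(a,b)$ with $1 \le a \le t_1$, $1 \le b \le t_2$, so that by Definition~\ref{prod} the entry of $C$ in position $((a,b),(p,q))$ is $A_1[a,p]\cdot A_2[b,q]$. Fix a column $c_0 = (p_0,q_0)$ and $d$ further distinct columns $c_j = (p_j,q_j)$, $1 \le j \le d$. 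It suffices to exhibit at least $\lambda_1\lambda_2$ rows $(a,b)$ with $C[(a,b),c_0] = 1$ and $C[(a,b),c_j] = 0$ for every $j$, since that is exactly inequality~(\ref{property:dlcff}) with parameter $\lambda_1\lambda_2$.

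First I would split $\{1,\dots,d\}$ into $J_0 = \{j : p_j = p_0\}$ and $J_1 = \{j : p_j \neq p_0\}$. For $j \in J_0$, distinctness of $c_j$ from $c_0$ forces $q_j \neq q_0$. The set $\{p_j : j \in J_1\}$ has at most $|J_1| \le d$ elements, all different from $p_0$, so applying the $(d;\lambda_1)$-CFF property of $A_1$ to column $p_0$ against these columns (using that the defining property for $d$ ``bad'' columns implies the analogous property for fewer, as removing columns only enlarges the set difference) yields a set $R_1$ of at least $\lambda_1$ rows $a$ with $A_1[a,p_0] = 1$ and $A_1[a,p_j] = 0$ for all $j \in J_1$. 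Symmetrically, $\{q_j : j \in J_0\}$ has at most $|J_0| \le d$ elements, all different from $q_0$, so the $(d;\lambda_2)$-CFF property of $A_2$ gives a set $R_2$ of at least $\lambda_2$ rows $b$ with $A_2[b,q_0] = 1$ and $A_2[b,q_j] = 0$ for all $j \in J_0$.

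Then I would verify that every pair $(a,b) \in R_1 \times R_2$ is a valid witness row: $C[(a,b),c_0] = A_1[a,p_0]\,A_2[b,q_0] = 1$; for $j \in J_1$ the factor $A_1[a,p_j] = 0$ gives $C[(a,b),c_j] = 0$; and for $j \in J_0$ the factor $A_2[b,q_j] = 0$ gives $C[(a,b),c_j] = 0$. Hence $C$ has at least $|R_1|\cdot|R_2| \ge \lambda_1\lambda_2$ such rows, so $C$ is a $(d;\lambda_1\lambda_2)$-CFF$(t_1t_2, n_1n_2)$.

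The only delicate point --- the ``main obstacle'', such as it is --- is getting this case split right: a column $c_j$ that shares its first coordinate $p_0$ with $c_0$ cannot be separated from $c_0$ by any row of the $A_1$-factor, so the second coordinates of precisely those columns must be handled by the $A_2$-factor, and one must check that the two resulting families of ``bad'' columns have size at most $d$ (they do, since $|J_0|,|J_1| \le d$) and avoid $q_0$, respectively $p_0$. The degenerate cases $J_0 = \emptyset$ or $J_1 = \emptyset$ are subsumed by the same argument, invoking only that every column of a $(d;\lambda_i)$-CFF contains at least $\lambda_i$ ones (the $d' = 0$ instance of the property).
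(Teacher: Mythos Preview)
Your proof is correct and follows essentially the same approach as the paper: the paper's split into $D_1$ (same block as $c_0$) and $D_2$ (other blocks), together with its ``generalized permutation sub-matrix'' picture with $\lambda_1$ copies of $A_2$ per diagonal block, is exactly your split $J_0$/$J_1$ and your product set $R_1\times R_2$, just phrased in block language rather than in your explicit coordinate-pair indexing. Your write-up is more self-contained and makes the $\lambda_1\lambda_2$ count explicit, but the underlying argument is the same.
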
 
	\begin{proof}
		Due to the fact that $A_1$ is a $(d;\lambda_1)$-CFF, we know that for any of its $d+1$ columns there will be at least $\lambda_1$ positions $i$ where $B_{i, c_{j_0}} = 1$ and $B_{i, c_{j_k}} = 0$, for all $1 \leq k \leq d$. The same is true for $A_2$ with $\lambda_2$ such positions. So, if we follow a similar proof from Theorem \ref{dcff} we will obtain a ``generalized permutation sub-matrix" of the form
		
		\[\overline{C} = \begin{pmatrix} 
		A_2 &  \ldots & \textbf{0}\\   
		\vdots &  \ldots & \vdots\\
		A_2 &  \ldots & \textbf{0}\\
		& \ddots & \\ 
		\textbf{0} & \ldots & A_2 \\
		\vdots & \ldots & \vdots \\
		\textbf{0} & \ldots & A_2 \end{pmatrix},\]
		
		where every $A_2$ appears $\lambda_1$ times in each column, and since each $A_2$ is also a $(d;\lambda_2)$-CFF, $C$ is a $(d;\lambda_1\lambda_2)$-CFF$(t_1t_2, n_1 n_2)$.
\end{proof}

We can also generalize Theorem \ref{d-sum} to generate $(d;\lambda)$-CFFs as follows.
	
	\begin{theorem}\label{d-sum-lambda} Let $A_1$ be a $(d;\lambda_1)$-CFF$(t_1, n_1)$, $A_2$ be a $(d;\lambda_2)$-CFF$(t_2, n_2)$,  $B$ be a $(d-1;\lambda_3)$-CFF$(s, n_2)$, and $\lambda = \text{min}(\lambda_1 \lambda_3, \lambda_2)$. Then \emph{C := \textbf{Const1}($A_1, A_2, B$)} is a $(d;\lambda)$-CFF$(st_1 + t_2, n_1 n_2)$.
	\end{theorem}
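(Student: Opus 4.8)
The plan is to follow the two-case skeleton of the proof of Theorem~\ref{d-sum}, but to replace ``find one private row'' (equivalently, the permutation submatrix of Proposition~\ref{id}) by ``count private rows'', exactly in the way Theorem~\ref{dlambdacff} generalized Theorem~\ref{dcff}. Concretely, fix $d+1$ column indices $\mathcal{C}=\{c_0,c_1,\ldots,c_d\}$ of the matrix $C$ obtained as \textbf{Const1}($A_1,A_2,B$) in Construction~\ref{constdcff}; by the symmetry of Definition~\ref{dlCFF} it suffices to exhibit at least $\lambda=\min(\lambda_1\lambda_3,\lambda_2)$ rows $i$ with $C_{i,c_0}=1$ and $C_{i,c_k}=0$ for all $1\le k\le d$. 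Throughout I would keep in mind that $C$ has $n_2$ blocks of $n_1$ consecutive columns, block $k$ coming from column $k$ of $B$ in $B\otimes A_1$ and carrying column $k$ of $A_2$ appended below each of its columns.

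\emph{Case 1:} the $d+1$ columns lie in $d+1$ distinct blocks. Then their appended lower parts are $d+1$ distinct columns of $A_2$, and the $(d;\lambda_2)$-CFF property of $A_2$ already supplies $\lambda_2\ge\lambda$ rows of $C$ private for $c_0$ among $\mathcal{C}$. \emph{Case 2:} the columns occupy only $l$ blocks with $1\le l\le d$. Let $k_0$ be the block of $c_0$ and $b_0$ the matching column of $B$; the other $l-1\le d-1$ occupied blocks contribute $l-1$ further columns of $B$. The $(d-1;\lambda_3)$-CFF property of $B$ gives at least $\lambda_3$ rows $r$ with $B_{r,b_0}=1$ and $B$ zero on those other columns; for such an $r$, the copy of $A_1$ sitting in block $k_0$ at row-block $r$ of $B\otimes A_1$ is surrounded by zeros in the remaining occupied blocks. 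Inside block $k_0$ the columns of $\mathcal{C}$ correspond to at most $d$ columns of $A_1$ (one of them $c_0$), so the $(d;\lambda_1)$-CFF property of $A_1$ yields at least $\lambda_1$ rows of $A_1$ private for $c_0$ there. Pairing each of the $\lambda_3$ good rows of $B$ with each of the $\lambda_1$ good rows of $A_1$ produces $\lambda_1\lambda_3\ge\lambda$ rows of $C$ that are $1$ in column $c_0$ and $0$ everywhere else in $\mathcal{C}$ (zero outside block $k_0$ because $B$ is zero, zero inside because $A_1$ is zero). The degenerate subcase $l=1$ is absorbed: with no competing block, the requirement on $B$ reduces to ``$b_0$ has weight $\ge\lambda_3$'', which is the $(d-1;\lambda_3)$-CFF property with zero other sets, and the same product count goes through.

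Combining both cases, $c_0$ always has at least $\lambda$ private rows among any $d+1$ columns, so $C$ satisfies Definition~\ref{dlCFF}; the dimensions $st_1+t_2$ and $n_1n_2$ are immediate from Construction~\ref{constdcff}. I expect the only real friction to be the bookkeeping in Case~2: one must check \emph{simultaneously} that at most $d-1$ blocks compete (so that $B$ being $(d-1;\lambda_3)$-CFF suffices) and that at most $d$ columns compete inside block $k_0$ (so that $A_1$ being $(d;\lambda_1)$-CFF suffices), and one must argue that the private rows coming from $B$ and from $A_1$ genuinely \emph{multiply} rather than merely add --- this is precisely where the block structure of $B\otimes A_1$ (each $1$ of $B$ spawning an independent copy of $A_1$) enters, mirroring the ``generalized permutation submatrix'' picture used in Theorem~\ref{dlambdacff}. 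Everything else is a routine adaptation of the proofs of Theorems~\ref{d-sum} and~\ref{dlambdacff}.
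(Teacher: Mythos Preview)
Your proposal is correct and follows essentially the same approach as the paper: the paper's proof is a terse sketch that invokes the two-case split of Theorem~\ref{d-sum}, handles Case~1 via the $(d;\lambda_2)$-CFF property of $A_2$, and handles Case~2 by exhibiting the generalized permutation submatrix in which each $A_1$ block appears $\lambda_3$ times (from $B$ being $(d-1;\lambda_3)$-CFF) and then invoking the $(d;\lambda_1)$-CFF property of $A_1$. Your write-up is in fact more explicit than the paper's, spelling out the multiplication $\lambda_1\lambda_3$ via the block structure of $B\otimes A_1$ and even noting the degenerate $l=1$ subcase; the only minor wording slip is ``at most $d$ columns of $A_1$ (one of them $c_0$)'' in Case~2, which should read ``at most $d$ columns of $A_1$ besides $c_0$'' to cover $l=1$ cleanly, but the argument itself is sound.
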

	\begin{proof}
		Follow a similar proof as in Theorem \ref{d-sum}. \emph{Case 1} holds because $A_2$ is a $(d;\lambda_2)$-CFF. \emph{Case 2} holds when we consider a ``generalized permutation sub-matrix" of the form
		
		\[\overline{C} = \begin{pmatrix} 
		A_1 &  \ldots & \textbf{0}\\   
		\vdots &  \ldots & \vdots\\
		A_1 &  \ldots & \textbf{0}\\
		& \ddots & \\ 
		\textbf{0} & \ldots & A_1 \\
		\vdots & \ldots & \vdots \\
		\textbf{0} & \ldots & A_1 \end{pmatrix},\]
		
		where every $A_1$ appears $\lambda_3$ times in each column, with $A_1$ a $(d;\lambda_1)$CFF.
	\end{proof}

Now we can use these two constructions to build ($d;\lambda$) nested families, which can be applied, for example,  to the problem of fault-tolerant aggregation of signatures with transmission errors that was discussed above. 

\begin{theorem}\label{theorem:nested-dlambda} 
	Let $\mathcal{M}$ be the incidence matrix of a $(d;\lambda)$-CFF$(t,n)$ (wlog we require $\mathcal{M}_{1,1} = 1$), and set $\mathcal{M}^{(1)} = \mathcal{M}$. We define $\mathcal{M}^{(l)} = \mathcal{M} \otimes \mathcal{M}^{(l-1)}$ for $l \geq 2$. Then $(\mathcal{M}^{(l)})_l$ is a nested family of incidence matrices of $(d;\lambda^l)$-CFFs, with same ratio of Theorem \ref{ratiod}.
\end{theorem}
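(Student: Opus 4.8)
The plan is to combine three earlier results: Theorem~\ref{dlambdacff} for the strengthening of the $(d;\lambda)$-parameter under Kronecker products, the argument of Theorem~\ref{theorem:nested-d} for the nesting structure, and the counting in the proof of Theorem~\ref{ratiod} for the compression ratio. None of the three pieces needs new ideas, so the proof is essentially an induction plus two observations.

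First I would establish by induction on $l$ that $\mathcal{M}^{(l)}$ is a $(d;\lambda^l)$-CFF. The base case $l=1$ is the hypothesis that $\mathcal{M}$ is a $(d;\lambda)$-CFF. For the inductive step, assume $\mathcal{M}^{(l-1)}$ is a $(d;\lambda^{l-1})$-CFF; since $\mathcal{M}^{(l)} = \mathcal{M}\otimes\mathcal{M}^{(l-1)}$ with $\mathcal{M}$ a $(d;\lambda)$-CFF, Theorem~\ref{dlambdacff} gives that $\mathcal{M}^{(l)}$ is a $(d;\lambda\cdot\lambda^{l-1})$-CFF $=(d;\lambda^l)$-CFF on $t^l$ points and $n^l$ blocks.

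Next I would verify the nesting property, which is \emph{verbatim} the argument in the proof of Theorem~\ref{theorem:nested-d} and is insensitive to the value of $\lambda$: because $\mathcal{M}_{1,1}=1$, the top-left $t^{l-1}\times n^{l-1}$ submatrix of $\mathcal{M}^{(l)}$ equals $\mathcal{M}^{(l-1)}$; and by the definition of the Kronecker product (Definition~\ref{prod}), every row of $\mathcal{M}^{(l)}$ lying below the copy of $\mathcal{M}^{(l-1)}$ is either a row of $\mathcal{M}^{(l-1)}$ itself (the block coming from a $1$ of $\mathcal{M}$ in column~$1$) or a row of all zeros, matching exactly the admissible shapes for the matrix $Z$ in Definition~\ref{unbounded2}. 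One also notes $X_{l-1}\subseteq X_l$ and that rows and columns are nondecreasing, so $(\mathcal{M}^{(l)})_l$ is a nested family.

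Finally, the ratio is obtained exactly as in the proof of Theorem~\ref{ratiod}: $\mathcal{M}^{(l)}$ has $n_l = \overline{n}^{\,l}$ columns and $t_l = \overline{t}^{\,l}$ rows, hence $t_l = n_l^{1/c}$ with $c=\log_{\overline{t}}\overline{n}>1$, giving $\rho(n)=n^{1-1/c}$. The only thing worth remarking on beyond these routine verifications — and the closest thing to a subtle point — is that the tolerated number of transmission losses $\lambda^l-1$ grows with $l$ (equivalently with $n$), so this construction automatically improves its loss-tolerance as the number of aggregated signatures increases; this is not an obstacle but is the payoff worth stating explicitly.
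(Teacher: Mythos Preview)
Your proposal is correct and follows essentially the same approach as the paper: the paper's proof also invokes Theorem~\ref{dlambdacff} (iterated, so the $\lambda$'s multiply to $\lambda^l$) for the $(d;\lambda^l)$-CFF property, refers to the argument of Theorem~\ref{theorem:nested-d} for the nesting structure, and implicitly relies on Theorem~\ref{ratiod} for the ratio. Your write-up simply spells out these steps in more detail than the paper's two-sentence proof, and your closing remark about the loss-tolerance $\lambda^l-1$ growing with $l$ is a nice observation not made explicit in the paper.
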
	
\begin{proof}
The proof $\mathcal{M}^{(l)}$ is $(d;\lambda^l)$-CFF comes from Theorem \ref{dlambdacff}, and the fact that in each step we are multiplying the $\lambda$s.
The proof it is a nested family follows a similar idea as the one from Theorem \ref{theorem:nested-d}. 
\end{proof}

\begin{theorem}\label{theorem:nested-dlambda-sum}
	Let $\mathcal{M}$ be a ($d;\lambda$)-CFF$(t_1,n_1)$ matrix and set $\mathcal{M}^{(1)} = \mathcal{M}$. Let $B_i$ be a $(d-1; 1)$-CFF$(s_i, n_i)$ matrix (wlog we require $B_{i_{1,1}} = 1$), for each $i \geq 1$. We recursively define $\mathcal{M}^{(l)} = \emph{\textbf{Const1}}(\mathcal{M}^{(l-1)}, \mathcal{M}^{(l-1)}, B_{l-1})$, for $l \geq 2$ and \emph{\textbf{Const1}} as defined in Construction~\ref{constdcff}. The sequence of matrices $(\mathcal{M}^{(l)})_l$ is a nested family of incidence matrices of ($d;\lambda$)-CFFs, with same ratio of Theorem \ref{ratiodsum}. The same applies for ($2;\lambda$)-CFFs with $B_i$ a $(1; 1)$-CFF from Corollary \ref{corollary:1cff}, the resulting nested family has the same ratio of Theorem \ref{ratiod2}.
\end{theorem}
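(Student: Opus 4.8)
The plan is to mimic the structure of the proof of Theorem~\ref{theorem:nested-d-sum} almost verbatim, substituting the $(d;\lambda)$-CFF generalization (Theorem~\ref{d-sum-lambda}) for the plain $d$-CFF construction (Theorem~\ref{d-sum}) wherever the cover-free property is invoked. First I would verify that the matrices $\mathcal{M}^{(l)}$ are indeed $(d;\lambda)$-CFFs. Applying Theorem~\ref{d-sum-lambda} with $A_1 = A_2 = \mathcal{M}^{(l-1)}$ and $B = B_{l-1}$ a $(d-1;1)$-CFF, we get that $\mathcal{M}^{(l)} = \textbf{Const1}(\mathcal{M}^{(l-1)}, \mathcal{M}^{(l-1)}, B_{l-1})$ is a $(d;\lambda')$-CFF with $\lambda' = \min(\lambda_{l-1}\cdot 1, \lambda_{l-1}) = \lambda_{l-1}$, where $\lambda_{l-1}$ is the $\lambda$-parameter of $\mathcal{M}^{(l-1)}$. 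By induction, since $\mathcal{M}^{(1)} = \mathcal{M}$ is a $(d;\lambda)$-CFF, every $\mathcal{M}^{(l)}$ is a $(d;\lambda)$-CFF; the key point is that taking $\lambda_3 = 1$ in the $B_i$'s means the minimum in Theorem~\ref{d-sum-lambda} never erodes the original $\lambda$.

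Next I would establish the nested-family structure. This argument is identical to the one in Theorem~\ref{theorem:nested-d-sum}: fixing $B_{l-1_{1,1}} = 1$ ensures $\mathcal{M}^{(l-1)}$ sits in the upper-left corner of $\mathcal{M}^{(l)}$, and the same block-diagram analysis shows that every partial row of $\mathcal{M}^{(l)}$ lying under $\mathcal{M}^{(l-1)}$ is either a row of $\mathcal{M}^{(l-1)}$, a row of zeros (from the Kronecker-product part of \textbf{Const1}), or a row of all ones / all zeros (from the appended columns of $\mathcal{M}^{(l-1)}$ in the bottom part). Since this placement argument is purely combinatorial and does not depend on the cover-free strength, it carries over unchanged. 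Hence $(\mathcal{M}^{(l)})_l$ satisfies Definition~\ref{unbounded2}.

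For the compression ratio, I would observe that the recursion on the dimensions $(t_l, n_l)$ is exactly the same as in Theorem~\ref{theorem:nested-d-sum}, namely $n_l = \overline{n}^{2^{l-1}}$ and $t_l = \overline{t}\prod_{i=1}^{l-1}(s_i+1)$, because $\textbf{Const1}$ produces $st_1 + t_2$ rows and $n_1 n_2$ columns regardless of the $\lambda$ values. Therefore choosing $B_l$ via Porat--Rothschild gives $s_l$ that is $\Theta(d^2 2^{l-1}\ln\overline{n})$ and the ratio computation of Theorem~\ref{ratiodsum} applies verbatim, yielding $\rho(n) = \frac{n}{(b\log_2 n)^{\log_2\log_2 n + D}}$ with the stated constants; for $d=2$, taking each $B_i$ to be the optimal $1$-CFF of Corollary~\ref{corollary:1cff} reproduces the computation of Theorem~\ref{ratiod2} and gives $\rho(n) = \frac{n}{(2\log_2 n)^{\log_2\log_2 n + D}}$. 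The only genuinely new ingredient over Theorem~\ref{theorem:nested-d-sum} is the bookkeeping of $\lambda$ through the recursion, and the main (minor) obstacle is simply checking that $\lambda_3 = 1$ in the $B_i$'s is the right choice so that $\min(\lambda_{l-1}\lambda_3, \lambda_{l-1}) = \lambda_{l-1}$ — i.e. that using a weaker $(d-1;1)$-CFF for $B$ does not degrade the preserved $\lambda$; everything else is a transcription of earlier arguments.
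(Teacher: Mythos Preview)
Your proposal is correct and follows exactly the approach of the paper, which simply cites Theorem~\ref{d-sum-lambda} for the $(d;\lambda)$-CFF property and refers back to the proof of Theorem~\ref{theorem:nested-d-sum} for the nested structure. Your write-up is in fact more detailed than the paper's own proof, explicitly spelling out the inductive preservation of $\lambda$ via $\min(\lambda_{l-1}\cdot 1,\lambda_{l-1})=\lambda_{l-1}$ and the reason the ratio calculations of Theorems~\ref{ratiodsum} and~\ref{ratiod2} transfer unchanged.
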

\begin{proof}
The proof $\mathcal{M}^{(l)}$ is $(d;\lambda)$-CFF comes from Theorem \ref{d-sum-lambda}. The proof it is a nested family follows a similar idea as the one from Theorem \ref{theorem:nested-d-sum}.
\end{proof}

\section{Final Remarks and Open Problems}
In this work, we focus on the problem of unbounded fault-tolerant aggregation of signatures proposed by Hartung et al.~\cite{fault}. Here we define a sequence of cover-free families, called \emph{nested} families, which generalizes monotone families introduced in~\cite{fault}. We give unbounded schemes using nested families with better compression ratios than the ones currently known using monotone families.

In Section \ref{sec:unbounded}, we detail the algorithms that achieve unbounded aggregation using nested families (algorithms in page \pageref{scheme1}).
In Section \ref{unboundedSchemes}, we give explicit constructions of such families for all positive integers $d$ with compression ratios relatively close to the \rmv{information theoretical bound}\todo{best known upper bound}, and for $d=1$ the compression ratio meets this bound.

 In Section 6, we show how an extra level of fault tolerance can be achieved by using well known  ($d;\lambda$)-CFFs.  This gives the ability to handle loss or corruption of up to $\lambda -1$ aggregates of the aggregation tuple $\tau$. Our constructions and compression ratios generalize naturally to nested families of ($d;\lambda$)-CFFs.

We observe that as $n$ increases in our constructions, it would be desirable to increase the threshold value $d$ as well. However, it is not hard to show that there can not be nested families with increasing $d$. Indeed, in the definition of nested families, if ${\cal M}_{l+1}$ is $d$-CFF, then so is the first block of columns (Proposition \ref{id}); by the structure of Z we know it does not add to the cover-free Property (\ref{property:cff}), we conclude ${\cal M}_{l}$ must be $d$-CFF.  Therefore, it appears to us that there is not much hope of
increasing $d$ in fault tolerant aggregate signature schemes.

In~\cite{COCOApaper}, we investigate embedding sequences of CFF that allows us to increase $d$ and $n$, which can be useful for other cryptographical applications of CFFs.
There, we propose some constructions of embedding families based on polynomials over finite fields. Some proposed variations of these constructions give monotone families (fixed $d$) with compression ratio $\rho(n)=n^{1-1/c}$, for every integer $c\geq 2$.
We believe further studies of monotone, nested and embedding families of CFFs are important directions for modern applications of group testing  where $n$ is naturally unbounded, in particular \todo{for} applications in cryptography.\\

\noindent
\textbf{Acknowledgments.} Thais Bardini Idalino acknowledges funding by CNPq-Brazil [233697/2014-4]. Lucia Moura was supported by NSERC RGPIN-2017-05212 discovery grant.

\end{document}